\newcommand{\ee}{{\mathrm e}}
\newcommand{\ii}{{\mathrm i}}
\newcommand{\dd}{{\mathrm d}}
\newcommand{\CC}{\mathbb{C}}
\newcommand{\RR}{\mathbb{R}}
\newcommand{\ZZ}{\mathbb{Z}}
\newcommand{\wpsi}{\widehat \psi}
\definecolor{blue_M}{rgb}{0.37,0.51,0.71}
\definecolor{orange_M}{rgb}{0.88,0.61,0.14}
\tikzstyle{arrowmid}[0.5]=[decoration=
\theoremstyle{plain}
\newtheorem{thm}{Theorem}[section]
\newtheorem{cor}[thm]{Corollary}
\newtheorem{lem}[thm]{Lemma}
\newtheorem{prop}[thm]{Proposition}
\newtheorem{propdef}[thm]{Proposition/Definition}
\theoremstyle{definition}
\newtheorem{defn}[thm]{Definition}
\newtheorem{rem}[thm]{Remark}
\newtheorem{exmp}[thm]{Example}
\title{Topology in shallow-water waves: \\ A spectral flow perspective}
\author[1]{Clément Tauber\thanks{clement.tauber@math.unistra.fr}}
\author[2]{Guo Chuan Thiang\thanks{guochuanthiang@bicmr.pku.edu.cn}}
\affil[1]{Institut de Recherche Mathématique Avancée, UMR 7501 Université de Strasbourg et CNRS, 7 rue René-Descartes, 67000 Strasbourg, France}
\affil[2]{Beijing International Center for Mathematical Research, Peking University, No. 5 Yiheyuan Road Haidian District, Beijing, P.R. China 100871}
\date{\today}
\begin{document}
\maketitle

\begin{abstract}
In the context of topological insulators, the shallow-water model was recently shown to exhibit an anomalous bulk-edge correspondence. For the model with a boundary, the parameter space involves both longitudinal momentum and boundary conditions, and exhibits a peculiar singularity. We resolve the anomaly in question by defining a new kind of edge index as the spectral flow around this singularity. Crucially, this edge index samples a whole family of boundary conditions, and we interpret it as a boundary-driven quantized pumping. Our edge index is stable due to the topological nature of spectral flow, and we prove its correspondence with the bulk Chern number index using scattering theory and a relative version of Levinson's theorem. The {full} spectral flow structure of the model is also investigated.
\end{abstract}

\section{Introduction}

Bulk-edge correspondence is a central concept of topological insulators. It states that topological indices defined independently for an infinite sample -- the bulk -- and for a half-infinite sample with boundary  -- the edge -- actually coincide. Originally investigated in Quantum Hall effect \cite{Halperin82}, bulk-edge correspondence theorems have then been established for a wide range of models, both discrete \cite{Hatsugai,SchulzBaldesKellendonkRichter00,GrafPorta,Avilaetal13,ProdanSchulzBaldes16,EssinGurarie11} and continuous \cite{AvronSeilerSimon94,BellissardVanElstSchulzBaldes94,CombesGerminet05,BourneRennie18,Drouot19}. Applications go beyond condensed matter physics, such as in optics \cite{RaghuHaldane08,DeNittisLein17}, acoustics \cite{Perietal19}, or fluid dynamics \cite{DelplaceMarstonVenaille17}, actually to any wave phenomenon that can be described by a self-adjoint operator. 

Recently, the bulk-edge correspondence has been shown to be violated for the first time in the shallow-water model \cite{TauberDelplaceVenaille19bis,GJT21}, that describes oceanic layers on Earth. Once regularized by an odd-viscous term, such a model has a well-defined bulk index: the Chern number. However, the edge picture is anomalous: there exists a family of self-adjoint boundary conditions such that the number of edge modes changes with the choice of boundary condition. The origin of this anomaly is rooted in the unbounded nature of the operator's spectrum and was analyzed via scattering theory and a relative version of Levinson's theorem, originally developed in \cite{GrafPorta}.

The aim of this paper is to circumvent this anomaly and restore a bulk-edge correspondence theorem for shallow-water waves, by defining the edge index in a manner which can sample a whole family of boundary conditions at once. We focus on the same family of boundary conditions studied in \cite{GJT21}, and first show that this provides a continuous parametrization of boundary operators. There is a special point in the joint parameter space of momentum and boundary conditions, at which self-adjointness is lost. We define the edge index as the spectral flow along a loop encircling this special point, and show that it {defines} a  topological index which coincides with the bulk Chern number. The situation is similar to that of Weyl semimetals, which exhibits spectral flow around singularities in the boundary momentum space where the Fredholm condition is lost \cite{Thiang21}.

Spectral flow is a powerful concept that has been used for bulk-edge correspondences in 2D discrete models of topological insulators in \cite{GrafPorta,DeNittisSchulzBaldes16,Braverman}, and 3D Weyl semimetals in \cite{Thiang21,Gomi}. In our context, the spectral flow along a loop counts the signed number of crossings of edge mode branches with an energy level, as the loop is traversed. Usually, a boundary condition is fixed when computing the number of edge modes, but this way of counting leads to an anomalous mismatch with the bulk index \cite{GJT21}. With hindsight, it is unsurprising that the correct edge index should sample different boundary conditions, since the bulk index does not pick out any particular boundary condition. Furthermore, our edge index reveals the intriguing occurrence of quantized pumping over a cycle of boundary conditions.

The interpretation of spectral flow as a topological quantum pump has a longstanding tradition, where the system is usually probed via cycles of external parameters such as fluxes \cite{Laughlin81,NiuThoulessWu85,AvronSeilerSimon94}. To our knowledge, our edge index is the first to involve a cycle of genuine boundary conditions of the sample. Fictitious cycles of boundary conditions were previously considered by the second author in \cite{Thiang21, CareyThiang}. 

Finally, we show that our new edge index is stable against perturbations, a crucial feature of topological quantities. This relies on a generalization of the naive spectral flow definition, based on \cite{P96,BLP05}, which deals with the occurrence of possible pathological edge mode branches. As a byproduct we investigate the complete spectral flow structure of the model.

\medskip

\paragraph{Outline.} In Section \ref{sec:Laplace.pump}, we provide, as a warm-up, an elementary example of the new type of topological quantum pump discussed above. Our main results on the bulk-boundary correspondence for the shallow-wave model are presented in Section~\ref{sec:main}. The proofs are detailed in Sections~\ref{sec:proof_continuity} and \ref{sec:BEC}. Finally, in Section~\ref{sec:sf.structure}, we discuss the full spectral flow structure of the shallow-water model on a half-space.

\section{Charge pumping via a cycle of boundary conditions}\label{sec:Laplace.pump}
On a line, the free Schr\"{o}dinger particle governed by the Hamiltonian $\Delta=-\frac{d^2}{dx^2}$ has spectrum $[0,\infty)$. There appears to be nothing topological about this operator, i.e.\ it has no bulk topological index.

We might therefore expect the same for the particle on the half-line $x\geq 0$. This is not the case. To conserve probability, we need self-adjoint boundary conditions at $x=0$, and these are given by (see pp.\ 144 of \cite{RS2})
\begin{equation*}
\psi'(0)+a\psi(0)=0,\qquad a\in\RR\cup\{\infty\}.
\end{equation*}
Here, $a=0$ labels the Neumann Laplacian on the half-line, while the Dirichlet Laplacian is labelled by $a=\infty$, which is identified with $a=-\infty$. The remaining $a\neq 0,\infty$ label Robin boundary conditions. Altogether, the parameter space of boundary conditions is the compactified line $\RR\cup\{\infty\}$ (i.e.\ a circle), and we write $\Delta_a$ for the half-line Laplacian subject to the $a$-boundary condition.

The essential spectrum of each $\Delta_a$ remains $[0,\infty)$. However, once $a>0$, the operator $\Delta_a$ acquires an eigenvalue $-a^2$ with normalizable eigenfunction $e^{-ax}$. Thus, as the parameter $a$ is increased along the loop $\RR\cup\{\infty\}$, one eigenvalue emerges out of the essential spectrum {when $a=0$, and it decreases towards $-\infty$ as $a$ is increased towards $\infty$.} This ``charge pump'' is illustrated in Figure~\ref{fig:freecase} below.
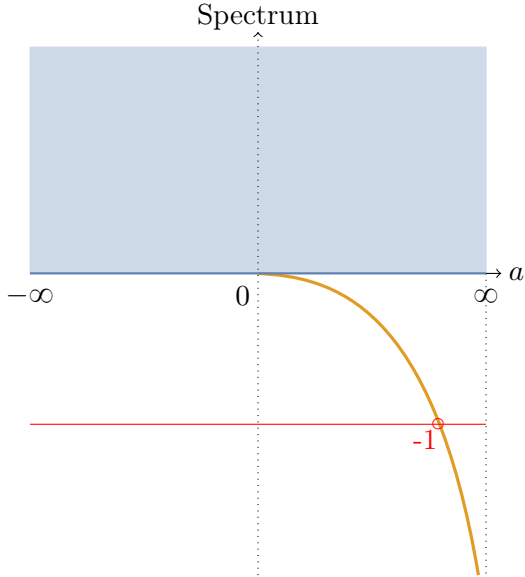
\begin{figure}[ht]
\centering
\begin{tikzpicture}
\filldraw[blue_M!30] (-3,0) rectangle (3,3);
\draw[->] (-3,0) -- (3.2,0);
\draw[dotted,->] (0,-4) -- (0,3.2);
\draw[dotted] (3,0) -- (3,-4);
\node at (3.4,0) {$a$};
\node at (0,3.4) {Spectrum};
\node at (-3,-0.3) {$-\infty$};
\node at (3,-0.3) {$\infty$};
\node at (-0.20,-0.3) {$0$};
\draw[very thick,orange_M] (0,0) .. controls (1,-0.05) and (2.3,-0.4) .. (2.9,-4);
\draw[thick,blue_M] (-3,0)--(3,0);
\draw[red] (-3,-2)--(3,-2);
\node[red] at (2.37,-2) {$\circ$};
\node[red] at (2.2,-2.2) {-1};
\end{tikzpicture}
\caption{The Laplace operator on the half-line {has boundary-driven spectral flow of $-1$ across any negative energy level, {as the boundary condition parameter $a$ is varied around a loop.}}\label{fig:freecase}}
\end{figure}

To achieve the above non-trivial spectral flow, it is important that we are dealing with unbounded operators, so that the negative eigenvalue does not have to rejoin the  {negative} essential spectrum {(which is empty)}. The family $\{\Delta_a+1\}_{a\in\RR\cup\{\infty\}}$ is an unbounded self-adjoint Fredholm loop, for which a precise theory of spectral flow  and its topological invariance was developed in \cite{BLP05,P96}. 

This example indicates that we need to carefully account for the topology of boundary conditions, when investigating bulk-boundary correspondence in continuum models which are spectrally unbounded, such as the shallow-water wave model. There are, however, two main differences with the simple model above. First, the shallow-water model is two-dimensional with translation invariance along the boundary, so that the parameter space is actually a (punctured) cylinder instead of a circle. Second, its essential spectrum is unbounded from above and below, which leads to more subtle situations for eigenvalue branches.

\section{Shallow-water model and main results\label{sec:main}}

\subsection{Bulk model and Chern number}
The linear, rotating and odd-viscous shallow-water model describes a two-dimensional thin layer of fluid of height $\eta$ and velocity $(u,v)$. Such variables are ruled by a system of linear partial differential equations that can be rewritten as a Schr\"odinger equation \cite{TDV19,GJT21} :
\begin{equation}\label{eq:ShallowWater_Schrodinger} 
	\ii \partial_t \psi = \mathcal H \psi, \qquad \psi = \begin{pmatrix}
		\eta \\ u\\ v
	\end{pmatrix}, \qquad \mathcal H = \begin{pmatrix}
		0 & p_x & p_y \\ p_x & 0 & -\ii (f-\nu  p^2) \\ p_y & \ii (f-\nu  p^2) & 0
	\end{pmatrix},
\end{equation}
with $p_x = -\ii \partial_x$, $p_y = -\ii \partial_y$  and ${p}^2 = p_x^2+p_y^2$. Here, $f, \nu$ are model constants satisfying
$f>0$, $\nu >0$ and $1-4f\nu>0$. 
The operator $\mathcal H$ is self-adjoint on the domain $H^1(\mathbb R^2)\oplus H^2(\mathbb R^2) \oplus H^2(\mathbb R^2)$ (denoting the usual Sobolev spaces) in $L^2(\mathbb R^2,\mathbb C^3)$. Due to translation invariance, the solutions of \eqref{eq:ShallowWater_Schrodinger} decompose into Fourier modes $\psi =\widehat\psi \ee^{\ii(k_x x + k_y y -\omega t)}$, with momentum $(k_x,k_y)\in \mathbb R^2$ and frequency $\omega \in \mathbb R$. This reduces \eqref{eq:ShallowWater_Schrodinger} to $H \wpsi = \omega \wpsi$ where $H(k_x,k_y)$ is a family of $3\times 3$ Hermitian matrices. Solving the eigenvalue equation leads to three frequency bands: 
\begin{equation}\label{eq:bulkbands}
\omega_\pm({k_x,k_y}) = \pm \sqrt{k^2+(f-\nu k^2)^2},\qquad \omega_0({k_x,k_y}) =0,
\end{equation}
with $k^2=k_x^2+k_y^2$. The bands are separated by two spectral gaps: $(0,f)$ and $(-f,0)$. It was shown in \cite{TDV19,GJT21} that each band is associated to a well defined topological index, the Chern number, with respective values $C_\pm = \pm2$ and $C_0=0$, attesting to non-trivial topology in the bulk (namely when $\mathcal H$ acts in $L^2(\mathbb R^2,\mathbb C^3)$).

\subsection{Half-space model and its singularity}
One anticipates a corresponding topological invariant of edge states in the half-plane problem. We therefore consider $\mathcal{H}$ of Eq.\ \eqref{eq:ShallowWater_Schrodinger} acting on the upper half-plane $y\geq 0$, denoting it by $\mathcal{H}^\sharp$. As a general convention, we will use a $\sharp$ superscript to denote half-space operators. We analyse translation-invariant (in the $x$-direction) boundary conditions, so that the momentum $k_x$ is still conserved. 

When we Fourier transform $\mathcal{H}$ of Eq.\ \eqref{eq:ShallowWater_Schrodinger} only in  the $x$-direction, we obtain a family $H(k_x),\, k_x\in \RR$ of ordinary differential operators acting on a line, explicitly given by the expression on the right side of Eq.\ \eqref{eqn:formal.1D} below. The spectrum of $H(k_x)$ is {the union over $k_y\in\RR$ of the spectrum of $H(k_x,k_y)$ (found in Eq.\ \eqref{eq:bulkbands} above), namely,}
\begin{equation}
\sigma(H(k_x))=\; (-\infty,-\sqrt{k_x^2+(f-\nu k_x^2)^2}]\,\cup\,\{0\}\,\cup\,[\sqrt{k_x^2+(f-\nu k_x^2)^2},\infty).\label{eq:bulk.spectrum.kx}
\end{equation}
{Note that $0$ is infinitely-degenerate, thus an essential spectral point of $H(k_x)$.}

Let $H^\sharp(k_x)$ denote the same formal differential operator as $H(k_x)$, but acting only on the \emph{half-line} $y\geq 0$:
\begin{equation}
	H^\sharp(k_x)=
	\begin{pmatrix} 0 & k_x & -\ii\frac{d}{dy} \\ k_x & 0 & -\ii(f-\nu(k_x^2-\frac{d^2}{dy^2})) \\ -\ii\frac{d}{dy} & \ii(f-\nu(k_x^2-\frac{d^2}{dy^2})) & 0\end{pmatrix}.
	\label{eqn:formal.1D} 
\end{equation}
Then $H^\sharp(k_x)$ is symmetric (i.e.\ formally self-adjoint) on the initial dense domain $C_c^\infty((0,\infty);\CC^3)\subset L^2((0,\infty);\CC^3)$, and we use the same symbol for its closure. It may be extended to a self-adjoint operator by specifying appropriate boundary conditions at $y=0$. As in \cite{GJT21}, we shall focus on the following family of boundary conditions:
\begin{equation}
	v|_{y=0} = 0, \qquad (\ii k_x u + a \partial_y v)|_{y=0} = 0. \label{eq:boundary_condition}
\end{equation}
{Here, the real parameters $k_x,a$ initially parametrize a plane, but $a=\infty$ is also allowed, and defines the same boundary condition as $a=-\infty$. So $a\in\RR\cup\{\infty\}$ parametrizes a compactified line, i.e.\ a circle, and $(k_x,a)$ parametrizes an infinite cylinder $\RR\times(\RR\cup\{\infty\})$.}

{Crucially, the $k_x=0=a$ case gives a \emph{vacuous} and hence non self-adjoint condition, and must be excluded. Thus the correct parameter space is a \emph{punctured} cylinder,
\begin{equation*}
	(k_x,a) \in (\RR\times (\RR\cup\{\infty\}))\setminus \{(0,0)\}=: \mathring{C},
\end{equation*}
as illustrated in Figure~\ref{fig:cylinder}. In Section \ref{sec:sf.structure}, we will see that this singularity of self-adjointness is the source of the topology of the half-plane model.}

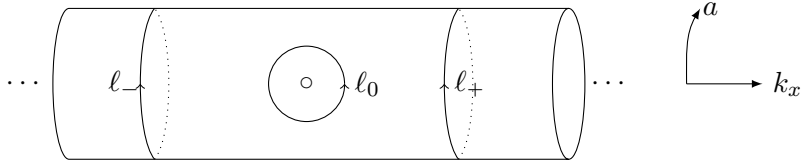
\begin{figure}[ht]
\centering
\begin{tikzpicture}
 
\node[cylinder, draw, minimum width = 2cm,
    minimum height = 7cm]
     (c) at (0,0) {$\circ$};
\node at (-3.7,0) {$\cdots$};
\node at (4,0) {$\cdots$};
\node (r) at (6,0) [right] {$k_x$};
\node (u) at (5.08,1) [right] {$a$};
\draw[-latex] (5,0) to (r);
\draw[-latex](5,0) to [out=90,in=240] (5.18,1);
\draw[->] (0.5,0) arc (0:360:0.5);
\node at (0.8,0) {$\ell_0$};
\draw[arrowmid] (-2,-1) ..  controls (-2.25,-0.7) and (-2.25,0.7) .. (-2,1);
\draw[dotted] (-2,-1) ..  controls (-1.75,-0.7) and (-1.75,0.7) .. (-2,1);
\node at (-2.4,0) {$\ell_-$};
\draw[arrowmid] (2,-1) ..  controls (1.75,-0.7) and (1.75,0.7) .. (2,1);
\draw[dotted] (2,-1) ..  controls (2.25,-0.7) and (2.25,0.7) .. (2,1);
\node at (2.15,0) {$\ell_+$};
\end{tikzpicture}
\caption{The parameter space $\mathring{C}$ of longitudinal momentum $k_x\in\mathbb{R}$ and boundary condition $a\in\mathbb{R}\cup\{\infty\}$. The singularity $(0,0)$ is removed, so that $\mathring{C}$ is a punctured cylinder. The directed loops $\ell_0, \ell_+, \ell_-$ are non-contractible in $\mathring{C}$.}\label{fig:cylinder}
\end{figure}

 The following continuity result is proved in Section \ref{sec:proof_continuity}.  

\begin{thm}\label{thm:gap.continuity}
	For each point $(k_x,a) \in \mathring{C}$, the boundary condition $\eqref{eq:boundary_condition}$ provides a self-adjoint extension of $H^\sharp(k_x)$ that we denote by $H^\sharp(k_x,a)$. Moreover, the resolvent assignment $$(k_x,a)\mapsto \left(H^\sharp(k_x,a)\pm\ii\right)^{-1}$$ is norm-continuous on the punctured cylinder $\mathring{C}$.
\end{thm}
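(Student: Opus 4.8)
The plan is to treat the two assertions separately: first establish self-adjointness pointwise on $\mathring{C}$ by the boundary-form (deficiency-index) method, then prove norm-continuity of the resolvent by realizing it as the solution operator of a one-dimensional boundary value problem and controlling its Green's kernel.

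For self-adjointness, I would start from the symmetric operator $H^\sharp(k_x)$ on $C_c^\infty((0,\infty);\CC^3)$ and integrate by parts in $\langle H^\sharp(k_x)\phi,\psi\rangle-\langle\phi,H^\sharp(k_x)\psi\rangle$. Since $H^\sharp(k_x)$ is formally self-adjoint the bulk contributions cancel, leaving a boundary sesquilinear form $B$ evaluated on the traces $(\eta,u,u',v,v')|_{y=0}\in\CC^5$. A direct computation shows that $B$ is skew-Hermitian of rank $4$, with a one-dimensional radical reflecting that the $\eta$-component is slaved to $(u,v)$ through the first row of \eqref{eqn:formal.1D}; equivalently $H^\sharp(k_x)$ has deficiency indices $(2,2)$, so its self-adjoint extensions are exactly the Lagrangian (maximal isotropic) subspaces of $B$. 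I would then verify by inspection that the two linear conditions in \eqref{eq:boundary_condition}, namely $v(0)=0$ and $\ii k_x u(0)+a\,v'(0)=0$, cut out a subspace that contains the radical, has the correct dimension, and is isotropic for $B$ in each of the cases $a\neq 0$, $a=0$ (which forces $u(0)=0$), and $a=\infty$ (which forces $v'(0)=0$). At the excluded point $(0,0)$ the second condition becomes vacuous, the candidate subspace is too large to be isotropic, and self-adjointness genuinely fails; this is precisely where the puncture of $\mathring{C}$ enters.

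For the resolvent, I would fix $z=\pm\ii$ and observe that, because $\sigma(H^\sharp(k_x,a))\subset\RR$, the operator $(H^\sharp(k_x,a)-z)^{-1}$ exists with norm at most $1$ for every $(k_x,a)$; the imaginary choice keeps $z$ a uniform distance from the whole real spectrum, including the infinitely degenerate flat band at $0$. The resolvent is then the solution operator of the inhomogeneous system $(H^\sharp(k_x)-z)\Psi=g$ subject to \eqref{eq:boundary_condition}, which I would construct explicitly by variation of parameters. The two ingredients are the two-dimensional space of homogeneous solutions lying in $L^2$ at $y=+\infty$ (its dimension equals the deficiency index $n_+=2$ found above) and the two-dimensional space of solutions meeting the boundary condition at $y=0$. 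Since $z\notin\RR$, the characteristic roots in $k_y$ are all off the real axis, giving a clean exponential dichotomy (two decaying, two growing solutions) and forcing these two spaces to intersect trivially, so the Green's kernel is well defined.

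Norm-continuity would then follow by showing the Green's kernel depends continuously on $(k_x,a)$ and by controlling the associated integral operator. The coefficients of the homogeneous system are polynomial in $k_x$, so continuous dependence of ODE solutions on parameters, together with persistence of the exponential dichotomy on compact sets, makes the decaying subspace and its stable projection vary continuously in $k_x$; the boundary subspace varies continuously in $(k_x,a)$, the point $a=\infty$ being handled by writing the condition projectively (dividing by $a$ sends it to $v'(0)=0$), so that it traces a continuous path in the relevant Grassmannian. The exponential decay of the kernel off the diagonal, uniform on compact subsets of $\mathring{C}$ thanks to the spectral gap at $z=\pm\ii$, then upgrades pointwise kernel convergence to operator-norm convergence via a Schur test (or a Hilbert--Schmidt estimate). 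The main obstacle, and the reason one cannot argue with the operators directly, is that the domain of $H^\sharp(k_x,a)$ moves with the boundary condition; routing everything through the resolvent and its Green's kernel is exactly what decouples the analytic estimates from the changing domain. An equivalent, perhaps cleaner packaging is Krein's resolvent formula relative to a fixed reference extension, which expresses the difference of resolvents as a term of rank at most $2$ built from the $\gamma$-field and Weyl function together with the boundary matrix of \eqref{eq:boundary_condition}; continuity then reduces to continuity of these finite-dimensional data and invertibility of the Krein parameter on $\mathring{C}$, the latter again failing exactly at the puncture.
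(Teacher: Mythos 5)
Your proposal is correct in outline, but it takes a genuinely different route from the paper, which never touches Green's kernels. The paper splits $H^\sharp(k_x)=H^\sharp_0+B(k_x)$ with $B(k_x)$ bounded (Eq.~\eqref{eqn:bounded.pert}), runs von Neumann theory once on the single fixed operator $H^\sharp_0$ (deficiency indices $(2,2)$, extensions labelled by $U\in{\rm U}(2)$), proves norm continuity of $U\mapsto(H^\sharp_0(U)+\ii)^{-1}$ by a soft argument on the splitting ${\rm Ran}(H^\sharp_0+\ii/2\nu)\oplus\mathcal{V}_+$ --- the resolvent is $U$-independent on the first summand and acts as $\frac{\nu}{\ii}(1+U)$ on the second (Lemma \ref{lem:sa.para.continuous}) --- and then exhibits \eqref{eq:boundary_condition} as a continuous classifying map $(k_x,a)\mapsto U_{\beta(k_x,a)}$ with $\beta(k_x,a)=(a+\ii\sqrt{2}k_x)/(a-\ii\sqrt{2}k_x)$, a Cayley transform continuous precisely on $\mathring{C}$; continuity in $k_x$ is then free, being a bounded perturbation. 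Your boundary-form treatment of self-adjointness is the symplectic counterpart of this, and your rank/radical count is right: the form pairs $\eta\leftrightarrow v$, $u\leftrightarrow v'$, $u'\leftrightarrow v$, has rank $4$, and its radical is spanned by a combination of $\eta(0)$ and $u'(0)$; one checks as you say that \eqref{eq:boundary_condition} is isotropic in all three regimes $a\in\RR\setminus\{0\}$, $a=0$, $a=\infty$, and fails to be so only at $(0,0)$. Your Krein-formula packaging of the continuity step is the closest relative of the paper's lemma, since the resolvent difference of two extensions here has rank at most $2$; your route buys explicit kernel and decay information at the cost of substantially more analysis.

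Two steps of your argument need repair, both traceable to the mixed (Douglis--Nirenberg) order of the system. First, the graph norm of $H^\sharp_0$ controls only $v'$, $v''$ and the \emph{combination} $\eta'-\nu u''$, so the maximal domain is strictly larger than $H^1\oplus H^2\oplus H^2$: for instance $(\nu u',u,0)$ with $u\in H^1\setminus H^2$ is annihilated by $(H^\sharp_0)^*$, and on such elements the traces $u'(0)$ and $\eta(0)$ are undefined. Your Lagrangian argument should therefore be run on the four-dimensional quotient ${\rm Dom}((H^\sharp)^*)/{\rm Dom}(H^\sharp)$ (equivalently, on smooth representatives, which is in effect what the paper's explicit deficiency basis does); the $\CC^5$ trace picture with its one-dimensional radical is only an overparametrization of this quotient. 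Second, the same degeneracy means the resolvent is not a pure integral operator: eliminating $\eta$ through the first row of the system produces the algebraic term $-z^{-1}g_1$ in the $\eta$-component (a bounded multiplication part, the shadow of the flat band) and inserts $g_1'$ into the reduced inhomogeneity, so ``pointwise kernel convergence plus Schur/Hilbert--Schmidt'' applies only after splitting off this local, parameter-independent part; moreover a kernel decaying only in $|y-y'|$ is never Hilbert--Schmidt on the half-line, so of your two options only the Schur test works for the bulk part (the boundary-correction part, which decays in $y+y'$, is Hilbert--Schmidt). Finally, the trivial intersection of the decaying and boundary solution spaces does not follow from the exponential dichotomy alone: it follows from symmetry of the boundary condition, since an intersection vector would be an $L^2$-eigenfunction at $z=\pm\ii$ --- which is exactly what fails at the puncture, consistent with your closing remark on the invertibility of the Krein parameter.
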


The continuity of the family $H^\sharp(k_x,a)$ is an essential step to prove the stability of the {edge} index defined below. Moreover, a continuity property with respect to a boundary condition is not straightforward and requires von Neumann theory of deficiency indices for self-adjoint extensions. See Section \ref{sec:proof_continuity} below. Finally, because we deal with unbounded operators, it is natural to work with the {topology of norm-resolvent convergence}.

\subsection{Bulk-edge correspondence}   For any fixed $R>0$, consider the following {non-contractible} loop in the parameter space $\mathring{C}$:
\begin{equation}\label{eq:def_CR}
\mathcal C_R = \{(k_x,a) = (R\cos \theta,R \sin \theta), \theta \in [-\pi,\pi]\} \subset \mathring C.
\end{equation}
It is homotopic to $\ell_0$ from Figure~\ref{fig:cylinder}. Each point in $\mathcal C_R$ defines a self-adjoint operator
\begin{equation*}
H^\sharp_{R,\theta}:=H^\sharp(R\cos \theta, R\sin \theta),
\end{equation*}
continuously parametrized by $\theta$ due to Theorem \ref{thm:gap.continuity}. 

{The essential spectrum of $H^\sharp_{R,\theta}$ is obtained by putting $k_x=R\cos \theta$ in Eq.\ \eqref{eq:bulk.spectrum.kx} for the spectrum of the bulk operator $H(k_x)$,} 
\begin{equation}\label{eq:esspec_Hsharp_theta}
	\sigma_\mathrm{ess}(H^\sharp_{R,\theta}) = (-\infty, -\omega_\theta] \cup \{0\} \cup [\omega_\theta, +\infty),\qquad \omega_\theta = \sqrt{R^2\cos(\theta)^2 + (f-\nu R^2\cos(\theta)^2)^2},
\end{equation}
{while} discrete eigenvalues {may appear} inside the gaps $(-\omega_\theta,0)\cup(0,\omega_\theta)$.  As $\theta$ varies, such eigenvalues  {trace out} continuous edge mode branches that live inside the local {essential spectral} gap of the family $\{H_{R,\theta}^\sharp\}_\theta$, see Figure~\ref{fig:naive_spectral_flow}.  We shall focus on {eigenvalue branches in} the upper gap $(0,\omega_\theta)$ from now on.  Recall that the global upper gap is $(0,f)$, so that $(0,f)\subset (0,\omega_\theta)$ for all $\theta$.

\begin{defn}\label{defn:edge.index.naive} 
		For $\mu \in (0,f)$, the number of edge modes $n(\mathcal C_R,\mu)$ is defined as the {signed} number of crossings of {the} edge modes branches of $H^\sharp_{R,\theta}$ with the fiducial line $\omega=\mu$, counted positively (resp. negatively) if the branch has negative slope (resp. positive slope) at the crossing, as $\theta$ increases.
\end{defn}

We illustrate this definition in Figure~\ref{fig:naive_spectral_flow} for various values of $R$. 

\begin{figure}[htb]
	\centering
	\includegraphics[scale=0.45]{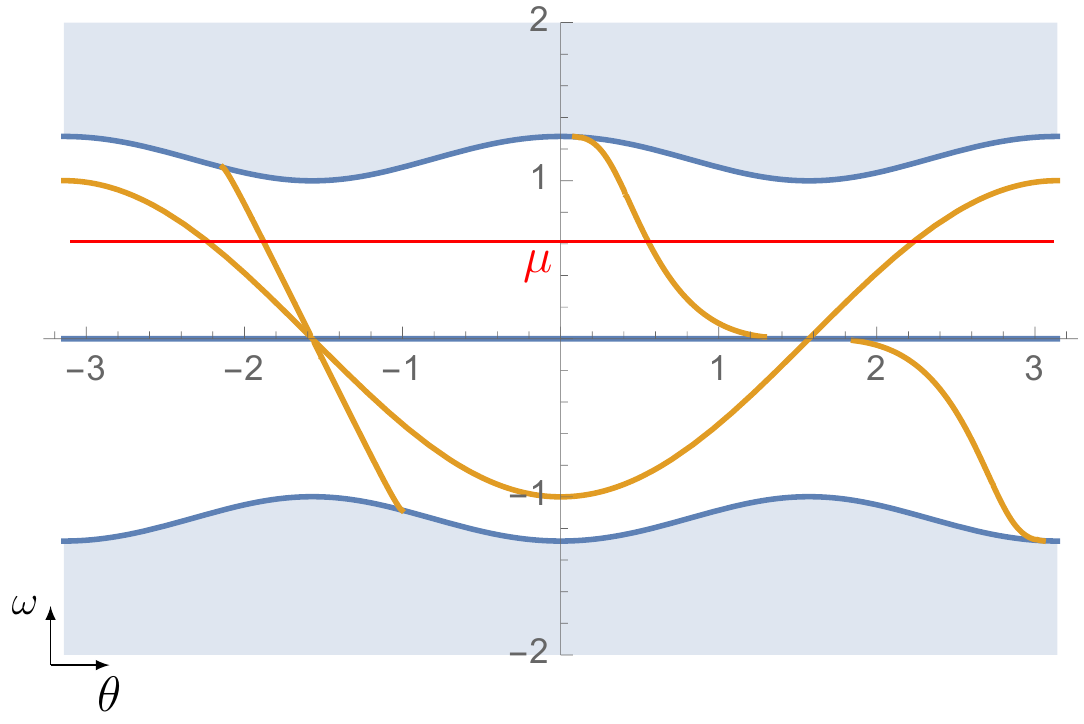} \includegraphics[scale=0.45]{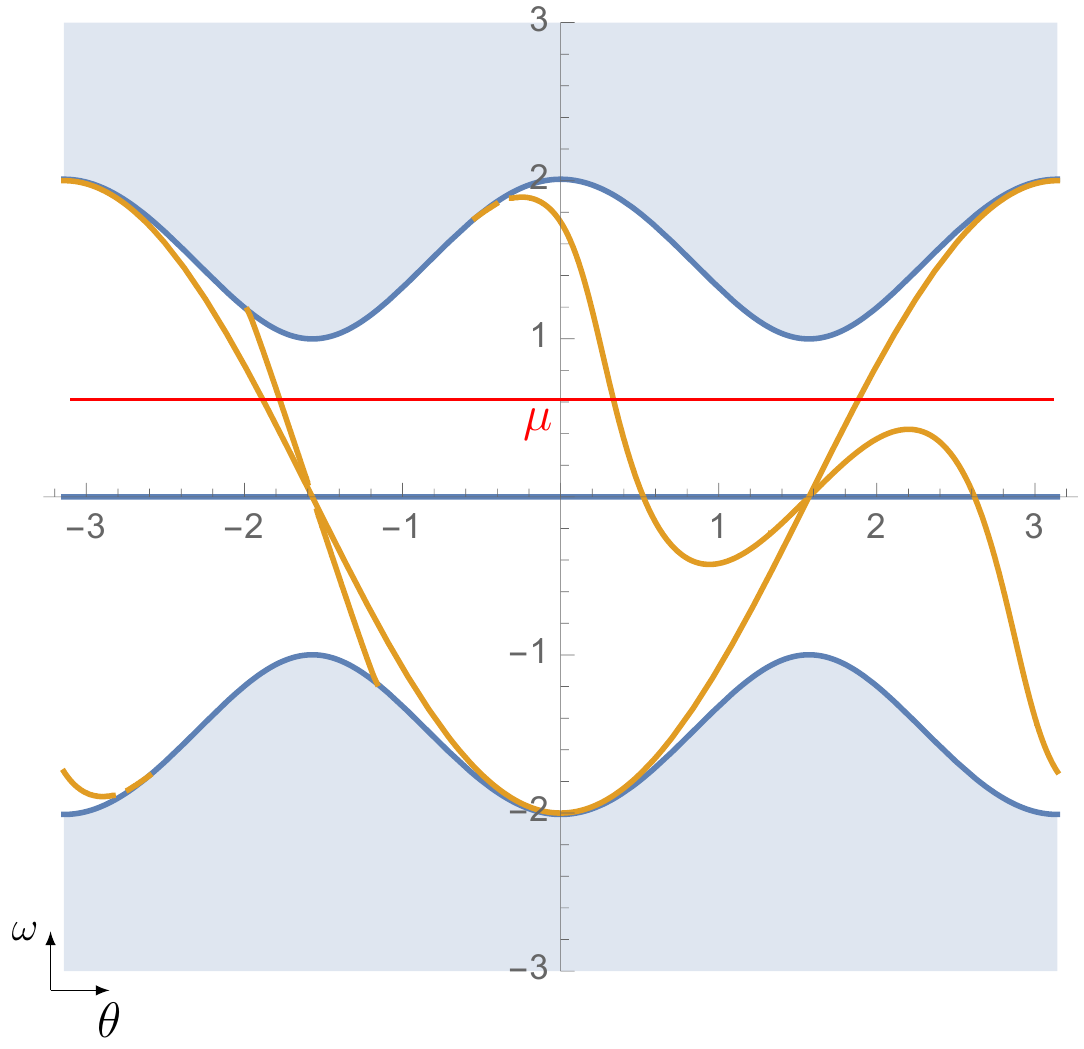} \includegraphics[scale=0.45]{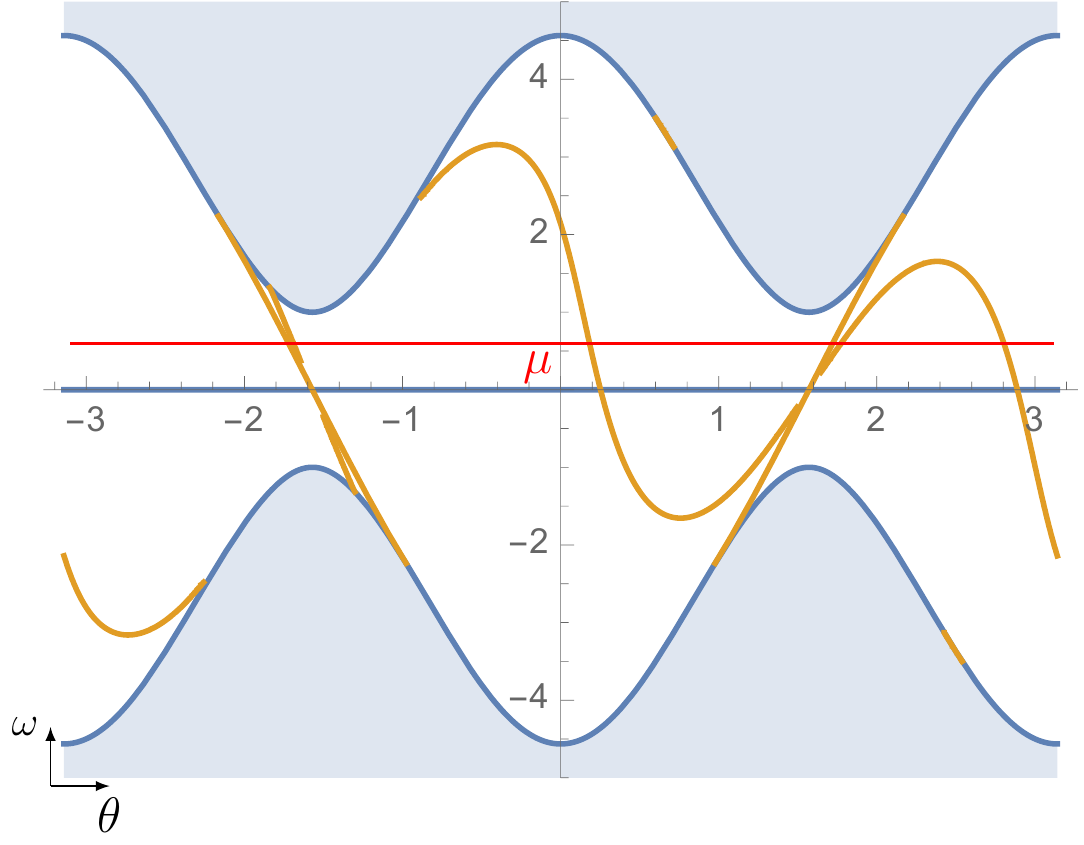}
	\caption{Spectrum of $H_{R,\theta}^\sharp$  for $f=1$ and $\epsilon=0.2$, along  $\mathcal C_R$ with $R=1,\, 2$ and $4$, respectively on the left, middle and right. Shaded blue regions correspond to extended solutions whereas yellow branches are edge modes confined near the boundary. {The number} $n(\mathcal C_R, \mu)$  {of} signed intersections of such branches with the red line $\omega=\mu$
		 is always $2$, {for any choice of $R>0$ and $\mu\in (0,f)$.}
		 \label{fig:naive_spectral_flow}}
\end{figure}

\begin{thm}[Bulk-edge correspondence]\label{thm:naive_BEC}
The number of edge modes $n(\mathcal C_R,\mu)$ is independent of $R>0$ and $\mu\in (0,f)$, and matches with the difference of bulk Chern numbers:
$$
n(\mathcal C_R, \mu) = C_+- C_0.
$$
\end{thm}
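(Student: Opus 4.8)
The plan is to recognize the signed crossing count of Definition~\ref{defn:edge.index.naive} as a \emph{spectral flow} and then evaluate it by scattering theory. Concretely, I regard $\theta\mapsto H^\sharp_{R,\theta}-\mu$ as a loop of self-adjoint operators and set
\[
 n(\mathcal C_R,\mu)=\mathrm{sf}\bigl(\{H^\sharp_{R,\theta}-\mu\}_{\theta\in[-\pi,\pi]}\bigr),
\]
with the convention that an eigenvalue branch crossing $\mu$ with negative slope in $\theta$ contributes $+1$. To justify that the right-hand side is well defined, I would first check that the loop is admissible: Theorem~\ref{thm:gap.continuity} gives norm-resolvent continuity in $\theta$, and since $\mu\in(0,f)\subset(0,\omega_\theta)$ lies in the essential-spectral gap \eqref{eq:esspec_Hsharp_theta} for every $\theta$ (a short computation using $1-4f\nu>0$ gives $\omega_\theta\ge f$, with equality only at $\theta=\pm\pi/2$), each $H^\sharp_{R,\theta}-\mu$ has $0$ at positive distance from its essential spectrum. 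Thus the shifted operators form an unbounded self-adjoint Fredholm loop in the sense of \cite{BLP05,P96}, and the net crossing count is literally its spectral flow.

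The first payoff is the asserted independence of $R$ and $\mu$, which I would obtain from homotopy invariance of spectral flow. Any two levels $\mu,\mu'\in(0,f)$ are joined by a straight path inside the gap, giving a homotopy from $\{H^\sharp_{R,\theta}-\mu\}_\theta$ to $\{H^\sharp_{R,\theta}-\mu'\}_\theta$ through Fredholm loops; and any two radii $R,R'>0$ give concentric loops $\mathcal C_R,\mathcal C_{R'}$ that are freely homotopic inside the punctured cylinder $\mathring C$ (both represent $\ell_0$), the interpolating circles still avoiding the singularity $(0,0)$ and keeping $\mu$ in the essential gap. Norm-resolvent continuity along these homotopies is again supplied by Theorem~\ref{thm:gap.continuity}.

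It remains to compute the integer and match it with $C_+-C_0$, the substantive part. I would pass to scattering theory for the half-line operators $H^\sharp(k_x,a)$: for $\omega$ in the gap, edge modes are the $L^2$ solutions of $H^\sharp(k_x)\wpsi=\omega\wpsi$, obtained by selecting the decaying solutions of the associated ODE system (the analytic continuation of the bulk dispersion \eqref{eq:bulkbands} to $k_y=\ii\lambda$, $\lambda>0$) and imposing \eqref{eq:boundary_condition}; this yields a secular function $D(\omega;k_x,a)$ whose zeros are the edge-mode branches. By an argument-principle computation, the net signed number of zeros crossing $\omega=\mu$ along $\mathcal C_R$ equals the winding of the scattering phase around the loop. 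Here I would invoke the relative version of Levinson's theorem of \cite{GrafPorta}, in the form adapted to this model in \cite{GJT21}, expressing the bound-state count via the (regularized) winding of the scattering matrix, and then show that the winding accumulated around $\mathcal C_R$ equals the jump of bulk Chern numbers $C_+-C_0$ across the gap: the loop encircling the self-adjointness singularity detects the topological charge carried by the $\omega_+$ band relative to the flat band $\omega_0$, exactly as for a Weyl point. Since $C_0=0$ and $C_+=2$, this gives $n(\mathcal C_R,\mu)=2$, with the sign $C_+-C_0$ (rather than its negative) fixed by the orientation of $\mathcal C_R$ together with the downward-crossing convention.

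The main obstacle is this last scattering/Levinson step. The essential spectrum of $H^\sharp_{R,\theta}$ is unbounded both above and below and carries an infinitely degenerate flat band at $\omega=0$; a naive Levinson count with a \emph{fixed} boundary condition is precisely what produced the anomaly of \cite{GJT21}. The delicate points are therefore the regularization needed to render the scattering winding finite, and the threshold analysis as $\omega\to 0^+$ and $\omega\to\omega_\theta^-$, where edge branches can emerge from or merge into the essential spectrum. Controlling this behaviour — including the possibly pathological branches that motivate the generalized spectral flow of Section~\ref{sec:sf.structure} — is where the real work lies.
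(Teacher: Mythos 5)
Your outline reproduces the paper's strategy: read $n(\mathcal C_R,\mu)$ as a spectral flow, get $R$- and $\mu$-independence from homotopy invariance via Theorem~\ref{thm:gap.continuity} (the paper records the $\mu$-independence in Eq.~\eqref{eq:fiducial.versus.straight} and handles pathologies in Section~\ref{sec:sf.structure}), then convert the edge count into a winding of the scattering amplitude by the relative Levinson theorem. Two of your worries are, however, already moot: since $C_0=0$ the flat band can simply be ignored (scattering theory cannot even be set up there), reducing the claim to $n_\mathrm{b}^+(R)=C_+$; and the threshold/regularization issue you flag is disposed of by compactness of the parameter --- the relative Levinson theorem \cite[Thm 13]{GJT21} applies to any compactly parametrized family such as $\theta\in[-\pi,\pi]$, the anomaly of \cite{GJT21} having come from the non-compact $k_x$-axis, not from the thresholds as such.

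The genuine gap is precisely the step you defer as ``where the real work lies'': you assert that the winding of $S$ accumulated around $\mathcal C_R$ equals $C_+$ because the loop ``detects the topological charge \ldots exactly as for a Weyl point,'' but this is not an argument, and it is exactly the new content of the paper's proof. The bulk-scattering correspondence of \cite[Thm 9]{GJT21} recovers $C_+$ as the winding of $S$ along a \emph{different} loop $\Gamma_{\delta,a_0}$, at fixed boundary parameter $a_0\neq0$, encircling in the $(k_x,\kappa)$-plane the point $\zeta$ where the chosen bulk section \eqref{eq:section_zeta} is singular; the edge count is the $\epsilon\to0$ winding along $\mathcal C_R^\epsilon$, which lives in a plane $\kappa=\epsilon$ transverse to $\Gamma_{\delta,a_0}$. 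Bridging the two (the final, unlabeled proposition of Section~\ref{sec:spec.Chern}) requires knowing the full singular set of $S$ in the three-dimensional domain $D_S$ --- the self-adjointness line $\{k_x=a=0,\ \kappa>0\}$ and the section-singularity line $\{k_x=0,\ \kappa=\zeta_y\}$ --- deforming both loops within $D_S$, and showing that the residual obstruction, the winding of $S$ around small loops $\ell_\alpha$ encircling the latter line inside a half-plane $\mathcal P$, vanishes. The paper does this by an explicit leading-order expansion of the secular determinant $g$ of Eq.~\eqref{defg}, yielding
\begin{equation*}
\lim_{\alpha\to 0} S \;=\; \ee^{-2\ii\theta}\,\frac{A\cos\theta-(B-\ii)\sin\theta}{A\cos\theta-(B+\ii)\sin\theta},
\end{equation*}
whose net winding is zero: the $\ee^{-2\ii\theta}$ factor inherited from the section's winding cancels against the winding of the M\"{o}bius-type quotient. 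Without this computation (or a substitute), the edge and bulk windings could a priori differ by a nonzero multiple of the winding around that line, and the equality $n(\mathcal C_R,\mu)=C_+-C_0$ remains unproven.
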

In \cite{GJT21} {one attempted to count the number of edge modes} 
for a fixed boundary condition $a\neq0$. {This violated a} more standard formulation of bulk-edge correspondence along the lines of \cite{GrafPorta, Hatsugai}. 
{In fact, the number of edge mode branches crossing $\omega=\mu$ depends on the choice of $\mu\in (0,f)$,} see \cite[Fig. 2]{GJT21}, so that another definition for edge mode counting has to be used {(\cite[Definition 4]{GJT21})}. {Still,} the system remains anomalous: the number of edge modes changes with $a$. The origin of the anomaly is rooted in the infinite region of the spectrum, as $k_x$ and $\omega$ go to infinity. 
{In this paper, we restrict} the parameters to a compact loop $\mathcal C_R$,  {and Theorem \ref{thm:naive_BEC}} above shows that the bulk-edge correspondence can be restored. 
The edge index $n(\mathcal C_R, \mu)$ can be interpreted as a quantized pumping {of edge modes from the upper to the middle band} when both $k_x$ and $a$ are driven together along the loop $\mathcal C_R$, {see Remark~\ref{rk:fixed_kx}.}

The proof of Theorem~\ref{thm:naive_BEC} is done in Section \ref{sec:BEC}. We {extend the approach from} \cite{GJT21} based on scattering amplitudes and Levinson's theorem, originally developed in \cite{GrafPorta} for discrete models. 


\subsection{Spectral flow stability} \label{sec:main_stability}
An essential feature of topological indices is their stability against perturbations. In Theorem \ref{thm:naive_BEC}, the edge mode count via transversal crossings, $n(\mathcal C_R, \mu)$, is well-defined due to regularity properties of the edge modes branches for the unperturbed $H_{R,\theta}$, a family of order 2 ordinary differential operators with constant coefficients. For example, at each $\theta$, there are only finitely many edge modes, so they do not accumulate into the bulk spectrum. Moreover, as $\theta$ is increased, the number of merging points with the bulk spectrum, used in Definition~\ref{defn:edge.index} later, is also finite. 
Such properties may not persist if we perturb $H_{R,\theta}$. 

It is therefore worth pointing out that $n(\mathcal C_R, \mu)$ is, up to a minus sign, a special case of \emph{spectral flow} a la Phillips \cite{P96,BLP05}, whose precise definition and homotopy invariance is recalled in Section \ref{sec:sf.structure}, Definition \ref{defn:spectral.flow}.
Crucially, the spectral flow formulation of edge mode counting still {makes sense} in the perturbed setting, where the aforementioned pathologies {in the edge mode branches} may occur. The only requirement is continuity with respect to the path parameter, which is ensured by Theorem~\ref{thm:gap.continuity}. 

\begin{cor}[{Bulk-edge correspondence, perturbed version}]\label{cor:perturbed}
Let $V=\{V(k_x)\}_{k_x \in \mathbb R}$ be a continuous and uniformly bounded family of self-adjoint operators on $L^2((0,\infty),\mathbb C^3)$. Assume each $V(k_x)$ to be relatively compact with respect to $H^\sharp(k_x)$. Let 
\begin{equation*}
\ell_0:[0,2\pi]\rightarrow \mathring{C},\qquad\theta\mapsto \ell_0(\theta)\equiv(k_x(\theta),a(\theta))
\end{equation*}
be any continuous loop winding around $(k_x,a)=(0,0)$ once in the anticlockwise sense. Then the operator loop
\begin{equation*}
\tilde{\ell_0}:\theta\mapsto H^\sharp(\ell_0(\theta))+V(k_x(\theta))-\mu
\end{equation*}
has well-defined spectral flow {${\rm Sf}(\tilde{\ell}_0)$} {across} any $\mu\in(0,f)$, which equals
$$
{\rm Sf}(\tilde{\ell}_0) = -C_+.
$$
\end{cor}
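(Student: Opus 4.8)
The plan is to deduce the corollary from Theorem~\ref{thm:naive_BEC} using two ingredients: the homotopy invariance of spectral flow for loops of unbounded self-adjoint Fredholm operators (as developed in \cite{P96,BLP05}), and the fact that relatively compact perturbations preserve the essential spectrum. The overall strategy is to reduce the perturbed loop over an arbitrary $\ell_0$ to the unperturbed loop over the standard circle $\mathcal C_R$, for which the answer is already known.

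First I would check that $\tilde{\ell}_0$ is an admissible loop, i.e.\ a norm-resolvent-continuous loop of self-adjoint operators that are Fredholm after subtracting $\mu$. The Fredholm property amounts to $\mu\notin\sigma_{\mathrm{ess}}(H^\sharp(\ell_0(\theta))+V(k_x(\theta)))$; since each $V(k_x)$ is relatively compact with respect to $H^\sharp(k_x)$, Weyl's theorem gives $\sigma_{\mathrm{ess}}(H^\sharp(k_x,a)+V(k_x))=\sigma_{\mathrm{ess}}(H^\sharp(k_x,a))$, and by Eq.~\eqref{eq:esspec_Hsharp_theta} together with $\omega_\theta\geq f$ (which follows from $1-4f\nu>0$) the interval $(0,f)$ lies in the common essential spectral gap for every $\theta$. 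Continuity of the perturbed resolvent would follow from Theorem~\ref{thm:gap.continuity}: writing $(H^\sharp+V-z)^{-1}=(1+(H^\sharp-z)^{-1}V)^{-1}(H^\sharp-z)^{-1}$ for $z$ with large imaginary part, the continuity of $\theta\mapsto(H^\sharp(\ell_0(\theta))-z)^{-1}$ and of $\theta\mapsto V(k_x(\theta))$ propagates through the inversion. This establishes that $\mathrm{Sf}(\tilde{\ell}_0)$ is well defined.

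Next I would remove the perturbation by a homotopy. Scaling $V\rightsquigarrow tV$, $t\in[0,1]$, keeps each $tV(k_x)$ relatively compact, so $\theta\mapsto H^\sharp(\ell_0(\theta))+tV(k_x(\theta))-\mu$ is, for every $t$, an admissible Fredholm loop, and the family is jointly continuous in $(\theta,t)$. Homotopy invariance of spectral flow then yields $\mathrm{Sf}(\tilde{\ell}_0)=\mathrm{Sf}(\theta\mapsto H^\sharp(\ell_0(\theta))-\mu)$. I would then homotope the base loop: since $\ell_0$ winds once around the puncture $(0,0)$ in the anticlockwise sense, it is freely homotopic in $\mathring C$ to the standard circle $\mathcal C_R$ of Eq.~\eqref{eq:def_CR}; composing with the continuous map $(k_x,a)\mapsto(H^\sharp(k_x,a)-\mu)$ (continuous by Theorem~\ref{thm:gap.continuity}) gives a free homotopy of Fredholm loops, so the spectral flows coincide. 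Finally, by the remark in Section~\ref{sec:main_stability}, the spectral flow of the unperturbed loop over $\mathcal C_R$ equals $-n(\mathcal C_R,\mu)$, and Theorem~\ref{thm:naive_BEC} together with $C_0=0$ gives $-n(\mathcal C_R,\mu)=-(C_+-C_0)=-C_+$.

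I expect the main obstacle to be the technical verification of norm-resolvent continuity of the perturbed family, and of the joint continuity throughout both homotopies, in the precise topology for which the spectral flow of unbounded self-adjoint Fredholm operators is defined and homotopy-invariant. In particular, one must confirm that relative compactness with respect to the symmetric operator $H^\sharp(k_x)$ transfers to each self-adjoint extension $H^\sharp(k_x,a)$ (using that their resolvents differ by finite rank via von Neumann theory) and is compatible with the uniform bound and continuity of $V$, so that Weyl's theorem applies uniformly along the loop. The remaining homotopy-theoretic and sign-bookkeeping steps are then routine given Theorems~\ref{thm:gap.continuity} and~\ref{thm:naive_BEC}.
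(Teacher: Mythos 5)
Your proposal is correct and follows essentially the same route as the paper's own proof: verify well-definedness of the spectral flow (essential spectrum unchanged under the relatively compact perturbation, continuity from Theorem~\ref{thm:gap.continuity}), then use homotopy invariance to turn off $V$ and deform $\ell_0$ to $\mathcal{C}_R$, concluding via Theorem~\ref{thm:naive_BEC} with the sign $n(\mathcal{C}_R,\mu)=-\mathrm{Sf}$. You merely spell out details the paper leaves implicit (Weyl's theorem, the resolvent identity for the perturbed family, and the transfer of relative compactness to the self-adjoint extensions), which is sound but not a different argument.
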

\begin{proof}
The essential spectrum of the perturbed system $H^\sharp(k_x,a)+V(k_x)-\mu$ does not depend on $V$, and it has a well-defined spectral flow along any loop in $\mathring{C}$ due to Theorem \ref{thm:gap.continuity}. By homotopy invariance of spectral flow, we may continuously turn the perturbation $V$ off and deform $\ell_0$ to the loop $\mathcal{C}_R$ without changing the spectral flow. Thus $ -{\rm Sf}(\tilde{\ell}_0)= -{\rm Sf}(\mathcal{C}_{R})\equiv n(\mathcal C_R,\mu)=C_+$ using Theorem \ref{thm:naive_BEC}.
\end{proof}

\begin{exmp} Suppose the presence of the boundary induces an $x$-independent potential function $g\equiv g(y)$ ($3\times 3$ Hermitian matrix-valued), which is continuous and decays to 0 as $y\to \infty$. The perturbation $V(k_x)$ is simply the multiplication operator $M_g$ for all $k_x$.
\end{exmp}

Section~\ref{sec:sf.structure} is dedicated to an exhaustive study of the spectral flow structure of the family $\{H^\sharp(k_x,a)\}_{(k_x,a)\in\mathring{C}}$. There, we extend the definition of spectral flow across any energy curve lying in {an essential} spectral gap. In particular, $\mathrm{Sf}^+$  {denotes} the spectral flow across any fiducial curve in the upper gap.
{Generally,} the spectral flow taken along any loop in $\mathring{C}$ could serve as a possible {edge} topological invariant. We actually show that all possible spectral flows are already determined by the one along $\mathcal{C}_R$. {More precisely,  $\pi_1(\mathring{C})$ is generated by two loops $\ell_+$ and $\ell_-$ oriented along increasing $a$ (Figure \ref{fig:cylinder}), and the main result is}
{
\begin{prop}\label{cor:bc.pumping}
	In the shallow-water wave model, the spectral flow structure is 
	\begin{equation*}
		{\rm Sf}^+(\ell_+)=-{\rm Sf^+}(\ell_-)=-\frac{C_+}{2}=-1,
	\end{equation*}
	where the loop $\ell_\pm$ can be taken at any fixed $k_x\gtrless 0$.
\end{prop}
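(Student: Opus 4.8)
\emph{Proof plan.} The plan is to combine a purely topological reduction, which fixes the \emph{difference} of the two spectral flows, with a single explicit edge-mode computation, which splits it. First I would pin down the homotopy type of $\mathring C$: the cylinder $\RR\times(\RR\cup\{\infty\})$ is homotopy equivalent to a circle, and deleting $(0,0)$ makes $\mathring C$ homotopy equivalent to a wedge of two circles, so $\pi_1(\mathring C)$ is free on the generators $\ell_+,\ell_-$ of Figure~\ref{fig:cylinder}, which may be placed at \emph{any} fixed $k_x>0$ and $k_x<0$ respectively (all choices of a given sign being homotopic, hence having equal spectral flow by homotopy invariance). With the orientations of Figure~\ref{fig:cylinder}, the loop $\mathcal C_R$ of \eqref{eq:def_CR}, which encircles the puncture, is homotopic to the concatenation $\ell_+\ast\ell_-^{-1}$ after inserting connecting arcs at $a=\infty$ that cancel in spectral flow: sliding a vertical loop from $k_x>0$ to $k_x<0$ is obstructed exactly by the hole, by one turn. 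Since Phillips' spectral flow (Definition~\ref{defn:spectral.flow}) is additive under concatenation and invariant under free homotopy of norm-resolvent-continuous self-adjoint Fredholm loops --- continuity being Theorem~\ref{thm:gap.continuity} --- I obtain
\[
{\rm Sf}^+(\ell_+)-{\rm Sf}^+(\ell_-)={\rm Sf}^+(\mathcal C_R)=-C_+=-2,
\]
the last equalities being Theorem~\ref{thm:naive_BEC} and Corollary~\ref{cor:perturbed}. It remains to split this, and a genuine spectral computation is unavoidable, since $\ell_+$ and $\ell_-$ lie on opposite sides of the singularity and no unitary or antiunitary fibre symmetry relates $H^\sharp(k_x)$ to $H^\sharp(-k_x)$: the three off-diagonal blocks cannot be sign-flipped an odd number of times by a diagonal conjugation.

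Second, I would compute ${\rm Sf}^+(\ell_+)$ at one convenient fixed $k_x>0$. Seeking decaying solutions $\psi\propto\ee^{-\lambda y}$, $\mathrm{Re}\,\lambda>0$, of $H^\sharp(k_x)\psi=\omega\psi$, the secular determinant of the $3\times3$ symbol vanishes and factors as $-\omega\bigl[\omega^2-(f-\nu k_x^2+\nu\lambda^2)^2-k_x^2+\lambda^2\bigr]=0$; the prefactor is the flat zero band, while the bracket governs the edge candidates. Imposing $v|_{y=0}=0$ fixes the admissible decaying combination up to scale, and the second condition of \eqref{eq:boundary_condition}, namely $\ii k_x u(0)+a(\partial_y v)(0)=0$, then determines the value $a=a(\omega)$ at which a given $\omega\in(0,\omega_{k_x})$, with $\omega_{k_x}=\sqrt{k_x^2+(f-\nu k_x^2)^2}$, is an edge eigenvalue. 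Read backwards as $a$ runs once around $\RR\cup\{\infty\}$, this is a loop of boundary conditions of the \emph{same} symmetric operator $H^\sharp(k_x)$, exactly in the spirit of the charge pump of Section~\ref{sec:Laplace.pump}: the distinguished conditions $a=0$ (forcing $u(0)=0$) and $a=\infty$ (forcing $(\partial_y v)(0)=0$) play the role of the Neumann and Dirichlet endpoints. The crux is the behaviour at the two gap edges: at the upper band edge $\omega\to\omega_{k_x}^-$ the bracket forces $\lambda\to0$, so the state delocalizes into $\sigma_{\mathrm{ess}}$, while as $\omega\to0^+$ it merges into the flat band. Tracking the single branch connecting these edges across a full $a$-cycle, and reading off the sign of its crossing with $\omega=\mu$, should yield ${\rm Sf}^+(\ell_+)=-1$.

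Inserting this into the difference relation gives ${\rm Sf}^+(\ell_-)=+1$, whence ${\rm Sf}^+(\ell_+)=-{\rm Sf}^+(\ell_-)=-C_+/2=-1$ as claimed; the antisymmetry is thus an \emph{output} of the computation, not of a symmetry. I would phrase everything through the generalized spectral flow of Section~\ref{sec:sf.structure} across the fiducial curve $\omega=\mu$; for these unperturbed, constant-coefficient operators the branches are regular, so it reduces to the transversal count of Definition~\ref{defn:edge.index.naive}, matching the interpretation as a boundary-condition-driven pump of one mode from the upper band into the flat band.

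I expect the main obstacle to be the second step: controlling $a(\omega)$ uniformly as $\omega$ approaches either gap edge, verifying that the branch genuinely detaches into the flat band at $\omega\to0^+$ and into the bulk band at $\omega\to\omega_{k_x}^-$ rather than turning back, and certifying that exactly one net crossing occurs with sign $-1$. The von Neumann deficiency-space description underlying Theorem~\ref{thm:gap.continuity} offers a robust alternative: it recasts the $a$-loop as a loop of unitaries between deficiency subspaces, so that ${\rm Sf}^+(\ell_+)$ becomes the winding number of that boundary unitary across $\mu$, a quantity often easier to extract than the detailed branch geometry.
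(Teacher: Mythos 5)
Your first half --- the identification $\pi_1(\mathring C)\cong F_2$ with generators $\ell_\pm$, the homotopy $\mathcal C_R\simeq \ell_+\ast\ell_-^{\rm op}$, and the resulting relation ${\rm Sf}^+(\ell_+)-{\rm Sf}^+(\ell_-)={\rm Sf}^+(\mathcal C_R)=-C_+=-2$ via Theorem~\ref{thm:naive_BEC} and Corollary~\ref{cor:perturbed} --- is exactly the paper's argument. The gap is in the second half: everything hinges on splitting $-2$ into $(-1)-(+1)$, and your proposed split rests on an edge-branch computation that you only sketch. The decisive facts --- that $a(\omega)$ is well defined on all of $(0,\omega_{k_x})$ (your one-dimensional reduction needs $v'(0)\neq 0$ along the branch), that the branch detaches into the flat band as $\omega\to 0^+$ and into the bulk band as $\omega\to\omega_{k_x}^-$ rather than turning back, and that the net signed crossing with $\omega=\mu$ is exactly $-1$ --- are precisely the points you flag as open obstacles, so as written the proposal does not establish ${\rm Sf}^+(\ell_+)=-1$. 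Your fallback, reading ${\rm Sf}^+(\ell_+)$ off as the winding of the deficiency-space unitary $\beta(k_x,\cdot)$ (which indeed winds $\mp 1$ for $k_x\gtrless 0$, cf.\ Remark~\ref{rem:bc.parameter.winding}), is not a theorem you may invoke: the half-line Laplacian of Section~\ref{sec:Laplace.pump} already shows that the boundary unitary can wind once while the spectral flow across a chosen gap is $0$ or $-1$ depending on which gap one looks at, so equating winding with ${\rm Sf}^+$ would itself require an argument of the same difficulty.

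Moreover, your stated reason for taking this harder route --- that ``a genuine spectral computation is unavoidable'' because no (anti)unitary fibre symmetry relates $H^\sharp(k_x)$ to $H^\sharp(-k_x)$ --- is mistaken. The paper's Proposition~\ref{prop:right.left.sf} obtains the antisymmetry ${\rm Sf}^+(\ell_+)=-{\rm Sf}^+(\ell_-)$ with no spectral computation at all: it is true that no exact fibrewise symmetry exists, but a homotopy suffices. One replaces the off-diagonal coupling $k_x$ by $tk_x$ in the bounded part $B(k_x)$ (keeping $k_x^2$ in the odd-viscosity terms), notes that for $|k_x|=\tilde k_x>\sqrt{f/\nu}$ the positive essential gap $\bigl(0,\sqrt{t^2\tilde k_x^2+(f-\nu\tilde k_x^2)^2}\bigr)$ stays open as $t$ runs from $1$ to $0$, and observes that at $t=0$ the operator depends on $k_x$ only through $k_x^2$, while the boundary condition \eqref{eq:boundary_condition} depends only on the ratio $a/k_x$; hence $H^\sharp(\tilde k_x,a;0)=H^\sharp(-\tilde k_x,-a;0)$, so the $a$-loop at $-\tilde k_x$ is the orientation-reversed $a$-loop at $+\tilde k_x$ and the spectral flows are opposite by homotopy invariance. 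Combined with your own difference relation this instantly yields ${\rm Sf}^+(\ell_\pm)=\mp 1$, bypassing entirely the branch-tracking analysis whose difficulties you correctly anticipated but did not resolve.
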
}
Thus our bulk-edge correspondence, Theorem \ref{thm:naive_BEC}, or the perturbed version Corollary \ref{cor:perturbed}, gives a \emph{complete} correspondence of bulk and {edge} topological invariants for the shallow-water model, {with boundary conditions given in Eq.~\eqref{eq:boundary_condition}}. 

{
\begin{rem}\label{rk:fixed_kx}
When {$k_x>0$ is fixed and} the boundary condition $a$ is varied once around a complete loop, there is pumping of a positive and fixed momentum edge mode from the upper band to the flat band; the pumping goes in the reverse direction for negative momentum modes. {This is another instance of charge pumping via a cycle of boundary conditions, as introduced in Section \ref{sec:Laplace.pump}.}	Loops at fixed $k_x$ such as $\ell_\pm$ are probably more easy to implement than $\ell_0$ in practice.
\end{rem}
}

\section{Continuity of the self-adjoint family {in the half-space model} \label{sec:proof_continuity} }

In this Section we prove Theorem \ref{thm:gap.continuity} {in two steps}. We first provide a continuous parametrization of all the abstract self-adjoint extensions of $H^\sharp(k_x)$ via the theory of von Neumann. Then we {derive Eq.\ \eqref{eqn:embedding.formula},}  {showing how} the parametrization by boundary conditions \eqref{eq:boundary_condition} is continuously embedded into the abstract scheme.

\subsection{Universal family of self-adjoint extensions}\label{sec:sa.ext}
We first decompose  $H^\sharp(k_x) = H^\sharp_0 + B(k_x)$ with
\begin{equation}\label{eqn:bounded.pert}
H^\sharp_0 := \begin{pmatrix} 0 & 0 & -\ii\frac{d}{dy} \\ 0 & 0 & -\ii\nu\frac{d^2}{dy^2} \\ -\ii\frac{d}{dy} & \ii\nu\frac{d^2}{dy^2} & 0 \end{pmatrix}, \qquad B(k_x) := \begin{pmatrix} 0 & k_x & 0 \\ k_x & 0 & -\ii f+\ii\nu k_x^2\\ 0 & \ii f-\ii\nu k_x^2 & 0\end{pmatrix}.
\end{equation}
Since $B(k_x)$ is bounded, it suffices to find the domains of self-adjointness for $H^\sharp_0$ to obtain those for $H^\sharp(k_x)$. 
The von Neumann theory (see X.1 of \cite{RS2} for a pedagogical treatment) provides a systematic way to do so.

As a preliminary simplification, observe that with the scaling substitution $\tilde{y}=\frac{y}{2\nu}$, we have {the formal differential operator}
\begin{equation*}
	H^\sharp_0=\frac{1}{2\nu}\begin{pmatrix} 0 & 0 & -\ii\frac{d}{d\tilde{y}} \\ 0 & 0 & -\frac{\ii}{2}\frac{d^2}{d\tilde{y}^2} \\ -\ii\frac{d}{d\tilde{y}} & \frac{\ii}{2}\frac{d^2}{d\tilde{y}^2} & 0 \end{pmatrix}.
\end{equation*}
{This defines a symmetric operator on the initial domain $C_c^\infty(\RR_+;\CC^3)\subset L^2(\RR_+;\CC^3)$, where $\RR_+=(0,\infty)$ denotes the half-line, and we take $H^\sharp_0$ to be the closure.} The adjoint $(H^\sharp_0)^*$ is an extension of $H^\sharp_0$ to the domain $H^1(\mathbb R_+)\oplus H^2(\mathbb R_+)\oplus H^2(\mathbb R_+)$ (the Sobolev spaces on $\mathbb R_+$) with no boundary condition imposed whatsoever. This {latter} domain is too large in the sense that $(H^\sharp_0)^*$ has $L^2$-eigenvalues off the real axis. That is, there are non-trivial \emph{deficiency subspaces}
\begin{align*}
	\mathcal{V}_{\pm}:={\rm ker}\left((H^\sharp_0)^*\mp \frac{\ii}{2\nu}\right)={\rm Ran}\left(H^\sharp_0\pm \frac{\ii}{2\nu}\right)^\perp,
\end{align*}
as captured by the \emph{deficiency indices}\footnote{The scaling by $2\nu$ is convenient for our problem and does not change the deficiency indices (\cite{RS2}, Theorem X.1).} $n_\pm={\rm dim}\,\mathcal{V}_\pm$, which we now compute.

The $+\ii/2\nu$ eigenfunctions of $(H^\sharp_0)^*$ can be found by the exponential ansatz $e^{-\lambda y}$, with the characteristic roots $\lambda$ satisfying
\begin{equation}
	0=4\nu^4\lambda^4-4\nu^2\lambda^2+1=(2\nu^2\lambda^2-1)^2.\label{eqn:char.roots}
\end{equation}
For normalizability, we need the positive root $1/\sqrt{2}\nu$, which is repeated, and conclude that $n_+=2$. That $n_-=2$ is deduced in a similar way.
\begin{prop}\label{prop:self.adjoint.domain.independence}
	The deficiency indices of $(H^\sharp_0)^*$ are $(n_+,n_-)=(2,2)$, so that the domains of the self-adjoint extensions of $H^\sharp_0$, thus also those of $H^\sharp(k_x)$, are in bijection with ${\rm U}(2)$.
\end{prop}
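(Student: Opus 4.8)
The plan is to invoke the von Neumann extension theorem in the form: for a densely defined closed symmetric operator with equal deficiency indices $n_+=n_-=n<\infty$, the self-adjoint extensions are in bijection with the unitary maps $U\colon\mathcal V_+\to\mathcal V_-$, the extension attached to $U$ being the restriction of $(H_0^\sharp)^*$ to the algebraic direct sum of $\mathrm{dom}(H_0^\sharp)$ with $(1+U)\mathcal V_+$. After fixing orthonormal bases, the set of such $U$ is $\mathrm{U}(2)$ precisely when $n=2$. So the entire content reduces to showing $n_+=n_-=2$; the $\mathrm{U}(2)$ statement is then automatic. The transfer to $H^\sharp(k_x)=H_0^\sharp+B(k_x)$ of Eq.\ \eqref{eqn:bounded.pert} is a soft step: $B(k_x)$ is a fixed bounded self-adjoint matrix, so $S\mapsto S+B(k_x)$ is a domain-preserving bijection between the self-adjoint extensions of $H_0^\sharp$ and those of $H^\sharp(k_x)$, leaving the parametrizing manifold unchanged.

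To pin down $n_+=\dim\mathcal V_+$, I would solve $(H_0^\sharp)^*\psi=\tfrac{\ii}{2\nu}\psi$ explicitly on $\RR_+$, using the scaled form of $H_0^\sharp$. Writing $\psi=(\psi_1,\psi_2,\psi_3)$, the first two scalar equations read $\psi_1=-\psi_3'$ and $\psi_2=-\tfrac12\psi_3''$, expressing $\psi_1,\psi_2$ through $\psi_3$; substituting these into the third equation collapses the $3\times3$ system to the single scalar fourth-order ODE $\psi_3''''-4\psi_3''+4\psi_3=0$, whose characteristic polynomial is exactly the perfect square appearing in Eq.\ \eqref{eqn:char.roots}. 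Since $\psi\mapsto\psi_3$ is a linear isomorphism from the solution space of the system onto that of the scalar ODE (the inverse being $\psi_3\mapsto(-\psi_3',-\tfrac12\psi_3'',\psi_3)$), it suffices to count normalizable solutions of the latter. All exponentially decaying solutions automatically lie in $H^1\oplus H^2\oplus H^2$, so $L^2(\RR_+)$-integrability is the only constraint to impose.

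The one point that genuinely needs care — and the main obstacle — is the multiplicity bookkeeping at the repeated root. The four-dimensional solution space of the scalar ODE is spanned by $\{e^{\pm y/\sqrt{2}\nu},\,y\,e^{\pm y/\sqrt{2}\nu}\}$, because each of the roots $\pm 1/\sqrt{2}\nu$ is double; imposing decay as $y\to\infty$ selects precisely the two solutions $e^{-y/\sqrt{2}\nu}$ and $y\,e^{-y/\sqrt{2}\nu}$. It is essential to confirm that the double root really contributes the generalized solution $y\,e^{-y/\sqrt{2}\nu}$ (rather than a single exponential), for this is exactly what makes $n_+=2$ instead of $1$. The analogous computation at the eigenvalue $-\tfrac{\ii}{2\nu}$ returns the conjugate decaying pair and hence $n_-=2$.

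With $(n_+,n_-)=(2,2)$ established, the von Neumann correspondence yields the claimed $\mathrm{U}(2)$ parametrization for $H_0^\sharp$, and the bounded perturbation $B(k_x)$ carries it verbatim to $H^\sharp(k_x)$. I anticipate no conceptual difficulty beyond the repeated-root count; the only real risk is an arithmetic slip in the reduction to the scalar ODE, which I would guard against by checking that its characteristic polynomial reproduces $(2\nu^2\lambda^2-1)^2$ of Eq.\ \eqref{eqn:char.roots}.
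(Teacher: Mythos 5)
Your proposal is correct and takes essentially the same approach as the paper: von Neumann theory together with the bounded symmetric perturbation $B(k_x)$ reduces everything to computing $(n_+,n_-)$ for $H^\sharp_0$, which the paper likewise does by solving $(H^\sharp_0)^*\psi=\pm\frac{\ii}{2\nu}\psi$ with decay at infinity, arriving at the same characteristic polynomial $(2\nu^2\lambda^2-1)^2$ with a repeated positive root and hence $n_\pm=2$. Your intermediate reduction to the scalar fourth-order ODE $\psi_3''''-4\psi_3''+4\psi_3=0$ is only a slightly more explicit way of certifying that the double root contributes the generalized solution $y\,e^{-y/\sqrt{2}\nu}$, a point the paper confirms instead by exhibiting the explicit deficiency basis $\psi_{1,\pm},\psi_{2,\pm}$ with linear-in-$\tilde{y}$ prefactors.
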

In more detail, given a unitary isomorphism $U:\mathcal{V}_+\rightarrow \mathcal{V}_-$, we define the operator $H^\sharp_0(U)$ acting on the {extended} domain
\begin{align}
	{\rm Dom}(H^\sharp_0(U))&=\{\phi+\psi_++U\psi_+\;|\;\phi\in{\rm Dom}(H^\sharp_0), \psi_+\in \mathcal{V}_{+}\},\nonumber\\
	H^\sharp_0(U)(\phi + \psi_+ + U\psi_+)&= H^\sharp_0(\phi) + \frac{\ii}{2\nu}\psi_+ -\frac{\ii}{2\nu} U\psi_+.\label{eqn:extension.formula}
\end{align}
Then $H^\sharp_0(U)$ is a self-adjoint extension of $H^\sharp_0$, and all self-adjoint extensions of $H^\sharp_0$ are of this form for some unitary $U$. We mention that a choice of orthonormal basis for $\mathcal{V}_+$ and for $\mathcal{V}_-$ is needed in order to identify $U$ with a matrix in ${\rm U}(2)$. 

Since $H^\sharp_0(U)$ is now self-adjoint, the resolvent $(H^\sharp_0(U)\pm \ii)^{-1}$ exists as a bounded operator from $L^2((0,\infty);\CC^3)\rightarrow {\rm Dom}(H^\sharp_0(U))$.
\begin{lem}\label{lem:sa.para.continuous}
	The map ${\rm U}(2)\ni U\mapsto (H^\sharp_0(U)\pm \ii)^{-1}$ is norm continuous. 
\end{lem}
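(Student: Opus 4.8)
The plan is to prove norm-resolvent continuity of $U\mapsto H^\sharp_0(U)$ by reducing it to an explicit formula for the resolvent in terms of $U$, exploiting the fact that the deficiency subspaces $\mathcal{V}_\pm$ are finite-dimensional (here $2$-dimensional). The key structural fact is that two distinct self-adjoint extensions $H^\sharp_0(U_1)$ and $H^\sharp_0(U_2)$ differ only through their boundary data, so their resolvents differ by a \emph{finite-rank} operator whose range lies in $\mathcal{V}_+\oplus\mathcal{V}_-$. Concretely, I would invoke Krein's resolvent formula: fixing the base point $z=\ii$ (or $z=-\ii$), there is an identity of the schematic form
\begin{equation*}
\left(H^\sharp_0(U)-z\right)^{-1}=\left(H^\sharp_0(U_0)-z\right)^{-1}+\gamma(z)\,M(U,z)\,\gamma(\bar z)^*,
\end{equation*}
where $H^\sharp_0(U_0)$ is one fixed reference extension, $\gamma(z)$ is a fixed bounded ``Krein $\gamma$-field'' mapping the (finite-dimensional) parameter space into $L^2$, and $M(U,z)$ is a matrix depending on $U$ that is manifestly continuous, indeed smooth, in the entries of $U$. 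Since $\gamma(z)$ and $\gamma(\bar z)^*$ do not depend on $U$, norm continuity of the whole resolvent follows immediately from continuity of the finite matrix $M(U,z)$.

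The steps I would carry out are as follows. First, fix orthonormal bases $\{\psi_+^{(1)},\psi_+^{(2)}\}$ of $\mathcal{V}_+$ and $\{\psi_-^{(1)},\psi_-^{(2)}\}$ of $\mathcal{V}_-$, which by the computation preceding Proposition~\ref{prop:self.adjoint.domain.independence} are explicit combinations of $e^{-y/\sqrt{2}\nu}$ and $y\,e^{-y/\sqrt{2}\nu}$ in each component; this identifies $U$ with a matrix in ${\rm U}(2)$. Second, using the von Neumann description of the domain in Eq.~\eqref{eqn:extension.formula}, I would solve the inhomogeneous equation $(H^\sharp_0(U)+\ii)\,u=g$ for arbitrary $g\in L^2$. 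Writing $u=\phi+\psi_++U\psi_+$ with $\phi\in{\rm Dom}(H^\sharp_0)$, I would apply Eq.~\eqref{eqn:extension.formula} and match against $g$; the projection onto $\mathcal{V}_\pm$ produces a $2\times 2$ linear system for the coefficients of $\psi_+$, whose coefficient matrix is an invertible affine function of the entries of $U$. Inverting this system (by Cramer's rule) expresses the $\mathcal{V}_+$-component of $u$, hence all of $u$, as a continuous function of $U$ applied to $g$, with the $U$-dependence confined entirely to the finite matrix. Third, I would read off that the map $U\mapsto(H^\sharp_0(U)+\ii)^{-1}$ factors through this continuous finite-dimensional assignment, and bound the operator-norm difference of two resolvents by the matrix-norm difference of the associated $M(U,z)$, times the fixed norms of $\gamma$, giving norm continuity. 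The $-\ii$ resolvent is handled by the same argument with $z=-\ii$.

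The main obstacle I anticipate is \emph{invertibility} of the $2\times 2$ coefficient matrix $M(U,z)^{-1}$ uniformly in $U\in{\rm U}(2)$: one must verify that the linear system arising from the matching step is solvable for every unitary $U$, which is precisely where the hypothesis $z=\pm\ii\notin\RR$ and the self-adjointness of $H^\sharp_0(U)$ enter. In abstract terms this is guaranteed because $(H^\sharp_0(U)\mp\ii)^{-1}$ exists as a bounded operator by self-adjointness (as noted just before the Lemma), so the concrete verification amounts to checking that the relevant matrix—built from the fixed overlaps $\langle\psi_-^{(i)},\psi_+^{(j)}\rangle$ and the entries of $U$—never becomes singular on the compact group ${\rm U}(2)$. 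Since ${\rm U}(2)$ is compact and the matrix entries depend continuously on $U$, once pointwise invertibility is established the inverse is automatically uniformly bounded, and continuity of $U\mapsto M(U,z)$ is elementary. An alternative, fully abstract route would be to cite the standard result that for finite deficiency indices the von Neumann parametrization yields a norm-resolvent-continuous family of extensions; but I would prefer the explicit Krein-formula computation here, since the finite-dimensionality makes it short and it prepares the explicit boundary-condition identification needed for Eq.~\eqref{eqn:embedding.formula} in the next subsection.
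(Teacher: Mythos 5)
Your proposal is correct in substance, but it takes a genuinely different route from the paper's proof, which never inverts a $U$-dependent matrix at all. The paper first invokes the standard equivalence of norm-resolvent continuity at different non-real points (\cite{RS1}, Theorem VIII.19) to replace $\pm\ii$ by the deficiency point $\ii/2\nu$, where the von Neumann parametrization \eqref{eqn:extension.formula} diagonalizes the problem: with respect to the orthogonal splitting $L^2={\rm Ran}(H^\sharp_0+\ii/2\nu)\oplus\mathcal{V}_+$ one has
\begin{equation*}
\left(H^\sharp_0(U)+\tfrac{\ii}{2\nu}\right)(\phi+\psi_++U\psi_+)=\left(H^\sharp_0+\tfrac{\ii}{2\nu}\right)\phi+\tfrac{\ii}{\nu}\,\psi_+,
\end{equation*}
so the resolvent acts $U$-independently on the first summand and as $\psi_+\mapsto\tfrac{\nu}{\ii}(1+U)\psi_+$ on $\mathcal{V}_+$; the $U$-dependence is manifestly affine and norm continuity is immediate. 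Your Krein-formula computation at $z=\pm\ii$ buys generality (any non-real $z$, any reference extension, and it meshes with boundary-triple machinery), at the cost of an invertibility analysis of the $2\times2$ matrix, which you resolve correctly: if the matching matrix were singular one could produce $0\neq\phi+\psi_++U\psi_+\in\ker(H^\sharp_0(U)+\ii)$, contradicting self-adjointness together with the direct-sum property of the von Neumann decomposition; note that compactness of ${\rm U}(2)$ is not even needed for continuity of the inverse, only for a uniform bound.

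One step of your concretization needs repair. At $z=\ii$, projecting the equation $(H^\sharp_0(U)+\ii)u=g$ onto $\mathcal{V}_\pm$ does \emph{not} close the $2\times2$ system: the term $(H^\sharp_0+\ii)\phi$ ranges over ${\rm Ran}(H^\sharp_0+\ii)$, whose orthogonal complement is the deficiency space $\ker\left((H^\sharp_0)^*-\ii\right)$ at the point $\ii$, whereas $\mathcal{V}_+=\ker\left((H^\sharp_0)^*-\tfrac{\ii}{2\nu}\right)={\rm Ran}\left(H^\sharp_0+\tfrac{\ii}{2\nu}\right)^\perp$ is the deficiency space at $\ii/2\nu$; these two $2$-dimensional subspaces coincide only if $2\nu=1$. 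So projecting onto $\mathcal{V}_+$ leaves a residual $\phi$-term and your matrix cannot be built from the overlaps $\langle\psi_-^{(i)},\psi_+^{(j)}\rangle$ alone. The fix is either to project onto $\ker\left((H^\sharp_0)^*-\ii\right)$ (with overlaps of that space against $\mathcal{V}_\pm$), or to run your computation at $z=\ii/2\nu$ --- where the system trivializes to $\psi_+=\tfrac{\nu}{\ii}P_{\mathcal{V}_+}g$ and no inversion is needed --- and then transfer to $\pm\ii$ via Theorem VIII.19, which is precisely the paper's shortcut. With that correction, your argument goes through and yields the same conclusion.
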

\begin{proof}
An equivalent statement is the continuity of $U\mapsto (H^\sharp_0(U)+ \ii/2\nu)^{-1}$ (\cite{RS1} Theorem VIII.19). There is an orthogonal splitting of the Hilbert space as ${\rm Ran}(H^\sharp_0+ \ii/2\nu)\oplus \mathcal{V}_{+}$. The domain of $H^\sharp_0(U)$ is the vector space sum ${\rm Dom}(H^\sharp_0)+I(U)$, where $
	I(U):=\{\psi_++U\psi_+\,|\,\psi_+\in\mathcal{V}_+\}.
	$
	From its construction, Eq.\ \eqref{eqn:extension.formula}, we have
	\begin{equation*}
		(H^\sharp_0(U)+\ii/2\nu)(\phi+\psi_++U\psi_+)=(H^\sharp_0+\ii/2\nu)\phi+(\ii/\nu)\psi_+.
	\end{equation*}
	Thus $(H^\sharp_0(U)+\ii/2\nu)^{-1}$ maps the first component ${\rm Ran}(H^\sharp_0+ \ii/2\nu)$ back to ${\rm Dom}(H^\sharp_0)$, independently of the choice of $U$. On the second component $\mathcal{V}_+$, the resolvent takes $\psi_+$ to $\frac{\nu}{\ii}(1+U)\psi_+\in I(U)$, whence we see that its dependence on $U$ is (norm-)continuous. 
\end{proof}

\begin{defn}\label{defn:sa.extension}
	For $U\in{\rm U}(2)$, we write $H^\sharp_U(k_x)$ for the self-adjoint extension of $H^\sharp(k_x)$ given by $H^\sharp_U(k_x):=H^\sharp_0(U)+B(k_x)$ (the right side is defined in Eq.\ \eqref{eqn:extension.formula} and \eqref{eqn:bounded.pert}), acting on the domain
	\begin{equation}
		{\rm Dom}(H^\sharp_U(k_x))={\rm Dom}(H^\sharp_0(U))=\{\phi+\psi_++U\psi_+\;|\;\phi\in{\rm Dom}(H^\sharp_0), \psi_+\in \mathcal{V}_{+}\}.\label{eqn:sa.domain}
	\end{equation}
\end{defn}

\begin{prop}\label{prop:resolvent.continuity}
	The resolvent $(H^\sharp_U(k_x)\pm \ii)^{-1}$ depends (jointly) continuously on $U\in{\rm U(2)}$ and on $k_x\in\RR$.  
\end{prop}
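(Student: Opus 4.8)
The plan is to reduce joint continuity to the two continuity inputs already available --- continuity in $U$ from Lemma~\ref{lem:sa.para.continuous}, and the trivial continuity of the bounded term $B(k_x)$ in $k_x$ --- by re-expressing the resolvent of $H^\sharp_U(k_x)$ entirely through \emph{bounded} operators. Write $A_U:=H^\sharp_0(U)$, $R_U:=(A_U+\ii)^{-1}$, and $B=B(k_x)$ from Eq.~\eqref{eqn:bounded.pert}; the $-\ii$ case is identical, so I treat only $+\ii$. The key point to flag immediately is that one \emph{cannot} compare the resolvents of $H^\sharp_U(k_x)$ and $H^\sharp_{U'}(k_x')$ via the naive second resolvent identity, because for $U\neq U'$ the operators $A_U$ and $A_{U'}$ are different self-adjoint extensions with different domains (Eq.~\eqref{eqn:sa.domain}), so their difference is unbounded and the resolvent-difference formula is ill-defined. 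Avoiding this domain mismatch is exactly what the factorization below is designed for, and it is the only genuine obstacle; everything after the reduction to bounded operators is soft.

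First I would record the inputs. The entries of $B(k_x)$ are polynomials in $k_x$, so $\RR\ni k_x\mapsto B(k_x)$ is norm continuous and locally bounded; by Lemma~\ref{lem:sa.para.continuous}, $U\mapsto R_U$ is norm continuous, and self-adjointness of $A_U$ gives the uniform bound $\|R_U\|\le 1$. Consequently the map $(U,k_x)\mapsto I+R_U\,B(k_x)$ is jointly norm continuous, being $I$ plus a product of two jointly continuous, locally bounded operator-valued factors.

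Next I use the algebraic factorization $A_U+B+\ii=(A_U+\ii)(I+R_U B)$ on ${\rm Dom}(A_U)$, which formally yields $(H^\sharp_U(k_x)+\ii)^{-1}=(I+R_U B)^{-1}R_U$; the only thing to check is that $I+R_U B$ is boundedly invertible. This holds for every $(U,k_x)$ because $H^\sharp_U(k_x)=A_U+B(k_x)$ is a bounded (Hermitian-matrix) self-adjoint perturbation of the self-adjoint $A_U$, hence self-adjoint by Kato--Rellich, so $S_U:=(H^\sharp_U(k_x)+\ii)^{-1}$ exists and is bounded. The second resolvent identity $R_U-S_U=R_U B\,S_U$ then exhibits $I-S_U B$ as a two-sided inverse of $I+R_U B$.

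Finally I invoke that inversion is norm continuous on the open set of invertible bounded operators: composing this with the jointly continuous, invertible-operator-valued map $(U,k_x)\mapsto I+R_U B(k_x)$ shows that $(U,k_x)\mapsto (I+R_U B(k_x))^{-1}$ is jointly continuous. Since $(H^\sharp_U(k_x)+\ii)^{-1}=(I+R_U B)^{-1}R_U$ is then a product of two jointly continuous maps, it is jointly continuous in $(U,k_x)$, which is the assertion.
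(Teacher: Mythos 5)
Your proof is correct and follows essentially the same route as the paper: the same decomposition $H^\sharp_U(k_x)=H^\sharp_0(U)+B(k_x)$, the same perturbed-resolvent factorization (the paper uses the mirror-image form $(H^\sharp_0(U)+\ii)^{-1}\left(1+B(k_x)(H^\sharp_0(U)+\ii)^{-1}\right)^{-1}$, which is equivalent to your $(I+R_UB)^{-1}R_U$), and the same continuity inputs from Lemma~\ref{lem:sa.para.continuous} and the polynomial dependence of $B(k_x)$. Your explicit verification that $I+R_UB$ is boundedly invertible, via Kato--Rellich self-adjointness of the perturbed operator and the second resolvent identity, fills in a step the paper leaves implicit, and your opening caveat about the domain mismatch for different $U$ correctly identifies why this factorization, rather than a naive resolvent difference, is needed.
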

\begin{proof}
	Suppose $(U^\prime,k_x^\prime)\rightarrow (U,k_x)$. Using a standard identity for the resolvent of a perturbed operator, we get the desired convergence,
	\begin{align*}
		\left(H^\sharp_{U^\prime}(k_x^\prime)+\ii \right)^{-1}&\equiv  \left(H^\sharp_0(U^\prime)+B(k_x^\prime) +\ii \right)^{-1}\\ 
		&=\left(H^\sharp_0(U^\prime)+\ii\right)^{-1}\left(1+B(k_x^\prime)(H^\sharp_0(U^\prime)+\ii)^{-1}\right)^{-1}\\
		&\longrightarrow  \left(H^\sharp_0(U)+\ii\right)^{-1}\left(1+B(k_x)(H^\sharp_0(U)+\ii)^{-1}\right)^{-1}\equiv\left(H^\sharp_{U}(k_x)+\ii \right)^{-1}.
	\end{align*}
	Here we used continuity of $k_x\mapsto B(k_x)$, and $U\mapsto (H^\sharp_0(U)+\ii)^{-1}$ (Lemma \ref{lem:sa.para.continuous}), as well as joint continuity of algebraic operations on bounded operators. Similarly, $(H^\sharp_{U^\prime}(k_x^\prime)-\ii)^{-1}\rightarrow (H^\sharp_{U}(k_x)-\ii)^{-1}$.
\end{proof}

\subsection{Sub-family {parametrized by} self-adjoint boundary conditions}
We proceed to identify which of the abstract self-adjoint domains in Section \ref{sec:sa.ext} (labelled by $U$) are realized by the concrete boundary conditions of \eqref{eq:boundary_condition} (labelled by $(k_x,a)$). As preparation, we work out the deficiency subspaces $\mathcal{V}_\pm$ explicitly. 

We may verify that
\begin{equation*}
	\psi_{1,\pm}(\tilde{y})=\begin{pmatrix} \sqrt{2}\tilde{y}-1 \\ \sqrt{2}-\tilde{y} \\ \pm \tilde{y} \end{pmatrix} e^{-\sqrt{2}\tilde{y}},\qquad \psi_{2,\pm}(\tilde{y})=\begin{pmatrix}\sqrt{2}\tilde{y} + 1 \\ -\tilde{y} \\ \pm (\tilde{y} + \sqrt{2})\end{pmatrix}e^{-\sqrt{2}\tilde{y}},
\end{equation*}
are $\pm \ii/2\nu$ eigenfunctions of $(H^\sharp_0)^*$, so that $\{\psi_{1,\pm},\psi_{2,\pm}\}$ spans $\mathcal{V}_\pm$. This pair of bases may not look optimal, but we will shortly see why it was chosen in this way.

Notice that $\langle \psi_{i,+}|\psi_{j,+}\rangle=\langle\psi_{i,-}|\psi_{j,-}\rangle$ for all $i,j=1,2$. So for each $\beta\in{\rm U}(1)$, the transformation of basis vectors,
\begin{equation*}
	U_\beta\psi_{1,+}:=\beta \psi_{1,-},\qquad U_\beta\psi_{2,+}:=\psi_{2,-},
\end{equation*}
defines a unitary map $U_\beta:\mathcal{V}_+\rightarrow\mathcal{V}_-$ . In turn, we obtain (through Definition \ref{defn:sa.extension}) a ${\rm U}(1)$-subfamily of self-adjoint extensions, $\beta\mapsto H^\sharp_{U_\beta}(k_x)$. This subfamily may be understood through boundary conditions as follows.

The basis vectors $\psi_{2,\pm}\equiv(\eta_{2,\pm},u_{2,\pm},v_{2,\pm})$ were engineered to ensure that $\psi_{2,+}+U_\beta\psi_{2,+}=\psi_{2,+}+\psi_{2,-}$ has {the following} boundary conditions in the last two {$u,v$} components,
\begin{equation*}
	(u_{2,+}+u_{2,-})|_{\tilde{y}=0}=(v_{2,+}+v_{2,-})|_{\tilde{y}=0}=(v^\prime_{2,+}+v^\prime_{2,-})|_{\tilde{y}=0}=0,
\end{equation*}
{while its} first component $(\eta_{2,+}+\eta_{2,-})|_{\tilde{y}=0}$ {is unrestricted}. This means that {for functions} in the vector space 
\begin{align*}
	I(U_\beta)&=\{\psi_++U_\beta\psi_+\;:\;\psi_+\in\mathcal{V}_+\}\\
	&=\{c_1(\psi_{1,+}+\beta\psi_{1,-})+c_2(\psi_{2,+}+\psi_{2,-})\,:\,c_1,c_2\in\CC\},
\end{align*}
{only the $c_1(\psi_{1,+}+\beta\psi_{1,-})$ part contributes to the boundary conditions,}
\begin{equation}
	(\psi_++U_\beta\psi_+)|_{\tilde{y}=0}=c_1(\psi_{1,+}+\beta\psi_{1,-})|_{\tilde{y}=0}=c_1\begin{pmatrix}(\sqrt{2}\tilde{y}-1)(1+\beta) \\ (\sqrt{2}-\tilde{y})(1+\beta) \\ \tilde{y}(1-\beta)\end{pmatrix}e^{-\sqrt{2}\tilde{y}}\Big|_{\tilde{y}=0}.\label{eq:second.contribution}
\end{equation}
Boundary conditions for the functions {$(\eta,u,v)^{\rm t}$ belonging to}
\begin{equation*}
	{\rm Dom}(H^\sharp_{U_\beta}(k_x))={\rm Dom}(H^\sharp_0({U_\beta}))={\rm Dom}(H^\sharp_0)+I(U_\beta)
\end{equation*}
come from the $I(U_\beta)$ part, and these {conditions} are read off from Eq.\ \eqref{eq:second.contribution} as
\begin{equation}
	v|_{y=0}=0,\qquad (1-\beta)u|_{y=0}=\sqrt{2}(1+\beta)v^\prime|_{y=0},\qquad\eta|_{y=0}\;{\rm unrestricted}.
	\label{eq:abstract.boundary.condition}
\end{equation}

Therefore, any boundary condition expressed in terms of $(k_x,a)$ in Eq.\ \eqref{eq:boundary_condition} has an equivalent expression in terms of some $\beta=\beta(k_x,a)$ in Eq.\ \eqref{eq:abstract.boundary.condition}, via the relation
\begin{equation}
	\frac{\ii a}{\sqrt{2}k_x}=\frac{u|_{y=0}}{\sqrt{2}v^\prime|_{y=0}}=:\frac{1+\beta(k_x,a)}{1-\beta(k_x,a)}\;\;\Leftrightarrow\;\; H^\sharp(k_x,a)=H^\sharp_{U_{\beta(k_x,a)}}(k_x).
	\label{eq:bc.parameter.change}
\end{equation}

Recall the Cayley transform homeomorphisms
\begin{align*}
{\rm Cay}&:\mathbb{R}\cup\{\infty\}\rightarrow {\rm U}(1),\qquad \gamma\mapsto \frac{\gamma-\ii}{\gamma+\ii},\\
{\rm Cay}^{-1}&:{\rm U}(1)\rightarrow\mathbb{R}\cup\{\infty\},\qquad \beta\mapsto \ii\frac{1+\beta}{1-\beta}.
\end{align*}
The relation in Eq.\ \eqref{eq:bc.parameter.change} is uniquely satisfied by
\begin{equation*}
	\beta(k_x,a)={\rm Cay}\left(-\frac{a}{\sqrt{2}k_x}\right)=\frac{a+\ii\sqrt{2}k_x}{a-\ii\sqrt{2}k_x}.
\end{equation*}
Note that the above formula also works in the $k_x=0$ case, where every $a\neq 0$ defines the same boundary condition corresponding to $\beta(0,a)=1$. (Recall that $(k_x,a)=(0,0)$ is inadmissible as a boundary condition.) Altogether, we have constructed the continuous classifying map
\begin{align}
	U_\beta:\mathring{C}&\rightarrow {\rm U}(1)\subset {\rm U}(2)\nonumber\\
	(k_x,a)&\overset{\beta}{\mapsto} \frac{a+\ii\sqrt{2}k_x}{a-\ii\sqrt{2}k_x}\mapsto U_{\beta(k_x,a)}.\label{eq:bc.to.extension}
\end{align}

\medskip

\begin{proof}[Proof of Theorem \ref{thm:gap.continuity}]
By definition, the assignment $(k_x,a)\mapsto H^\sharp(k_x,a)$ factors as
\begin{equation}
(k_x,a)\mapsto U_{\beta(k_x,a)}\mapsto H^\sharp_{U_{\beta(k_x,a)}}(k_x)\equiv H^\sharp(k_x,a),\label{eqn:embedding.formula}
\end{equation}
so the continuity result follows immediately from Prop.\ \ref{prop:resolvent.continuity}.
\end{proof}

\begin{rem}\label{rem:bc.parameter.winding}
The map $\beta$ in Eq.\ \eqref{eq:bc.to.extension} restricts to a winding number $\mp 1$ map when $k_x\gtrless 0$, and degenerates to the constant map 1 at $k_x=0$. The puncture at $(k_x,a)=(0,0)$ means that at $k_x=0$, the map $\beta$ is only defined along a punctured circle $(\RR\cup\{\infty\})\setminus\{0\}$, and has ill-defined winding number. This allows the winding number of $\beta(k_x,\cdot)$ to ``continuously'' switch signs when $k_x$ switches sign. 
\end{rem}

\section{Bulk-edge correspondence via scattering theory \label{sec:BEC}}
In this section we prove Theorem~\ref{thm:naive_BEC} using scattering theory. Such an approach was developed in \cite{GJT21} for a fixed value of $a$, so we extend it here to an $a$-dependent framework. The strategy is the following: We first define the scattering amplitude $S$ and establish its main properties. Then we relate the number of edge modes with the winding number of $S$ along some limiting loop approaching $\mathcal C_R$. Furthermore, the Chern number $C_+$ is also related to a winding number of $S$ along some other loop, which is independent of $a$. Finally we show that the two winding numbers coincide. 

Before that, we reformulate the number of edge modes in a more suitable way. 
\begin{defn}\label{defn:edge.index}
The number of edge modes $n_\mathrm{b}(R)$ below a bulk spectral band is the signed number of
	edge mode branches emerging (counted positively) or disappearing (counted negatively) at the lower band limit, as $\theta$ increases. The number $n_\mathrm{a}(R)$ of edge modes above a band is counted likewise up to a global	sign change.
\end{defn}

The family $H_{R,\theta}^\sharp$ has three bulk bands, see \eqref{eq:esspec_Hsharp_theta}, that we denote by $+,\,-$ and $0$. The definition above is more general than Definition~\ref{defn:edge.index.naive} and works beyond compact parameters, see \cite{GJT21}. However, the parameter $\theta$ is compact here and since the edge mode branches are continuous, the two definitions coincide in our case. One has 
\begin{equation}
n_\mathrm{b}^+(R) = n(\mathcal C_R,\mu) = n_\mathrm{a}^0(R).\label{eq:fiducial.versus.straight}
\end{equation}
One could think of $n_\mathrm{b}^+(R)$ as a crossing counting with a fiducial line $\omega=\mu$ that has been continuously deformed to the bottom of the $+$ band curve $\mu_\theta = \omega_\theta$ (see Figure \ref{fig:spectral_flow}). The number $n_b$ can then be related to the Chern number via scattering theory. Moreover, notice that the middle band $\omega_0=0$ is completely flat so that scattering theory cannot be developed there. However, because it is also topologically trivial ($C_0=0$), we can simply ignore it and reduce the proof of Theorem~\ref{thm:naive_BEC} to the relation $n_\mathrm{b}^+(R) = C_+$. Finally, notice that Section~\ref{sec:sf.structure}  discusses the possibility of extending Definition~\ref{defn:edge.index} via Phillips spectral flow across an energy curve $\mu_\theta$.

\subsection{Scattering amplitude} 

This subsection mostly imports from \cite{GJT21} the required formalism such as bulk sections, scattering state and amplitude. We include it for self-consistency of the paper, to set notations and to provide explicit expressions that are used later. We also emphasize the $a$ dependence. 

\paragraph{Bulk data.} A Fourier decomposition of bulk equation \eqref{eq:ShallowWater_Schrodinger} with modes $\psi =\widehat\psi \ee^{\ii(k_x x + k_y y -\omega t)}$ leads to the eigenvalue equation:
\begin{equation}\label{eq:ShallowWater_bulk}
	H \widehat \psi = \omega \widehat \psi, \qquad \widehat \psi = \begin{pmatrix}
		\hat \eta \\ \hat u\\ \hat v
	\end{pmatrix} \qquad  H({k_x,k_y}) = \begin{pmatrix}
		0 & k_x & k_y \\ k_x & 0 & -\ii (f-\nu  {k}^2) \\ k_y & \ii (f-\nu  {k}^2) & 0
	\end{pmatrix},
\end{equation}
with ${k}^2=k_x^2+k_y^2$ and $H({k_x,k_y})$ a Hermitian matrix. It admits three frequency bands $\omega_\pm$ and $\omega_0$ given in \eqref{eq:bulkbands}. We shall focus on the upper band $\omega_+$ from now on. With $\nu>0$, the problem can be compactified at $k\to \infty$, and we identify the compactified $k$-plane with the Riemann sphere $\mathbb C \cup \{\infty\} \cong S^2$ via $z =k_x+\ii k_y\equiv (k_x,k_y)$. We call $\wpsi(k_x,k_y)$ a bulk section. Since $C_+=2$, it is impossible to find a global section that is regular for all $z \in S^2$. We need at least two distinct ones, that are regular locally on two overlapping patches to cover the sphere. One section is given by
\begin{equation}\label{eq:section_infty}
	\widehat \psi^\infty(k_x,k_y) = \dfrac{1}{k_x- \ii k_y} \begin{pmatrix}
		{k}^2/\omega_+ \\ k_x-\ii k_y q \\ k_y+ \ii k_x q 
	\end{pmatrix}\,, \quad 
	q(k_x,k_y) := \tfrac{f-\nu {k}^2}{\omega_+}\,,\quad \omega_+=\omega_+(k_x,k_y)\,.
\end{equation}
Notice that $q \rightarrow 1$ (resp. $-1$) as $k\rightarrow 0$ (resp. $\infty$). Thus \eqref{eq:section_infty} defines a section of the eigenbundle of $\omega_+$ that is smooth and non zero for all $z \in \mathbb C$, including $z=0$, but not at $\infty$, where it is singular and winds like $z/\bar{z}$. We can move the singularity by defining for $\zeta=\zeta_x + \ii \zeta_y \in \mathbb C$,
\begin{equation}\label{eq:section_zeta}
	\widehat \psi^\zeta = t^\zeta_\infty \widehat \psi^\infty, \qquad t^\zeta_\infty(z)=\frac{\bar{z}-\bar{\zeta}}{z-\zeta}.
\end{equation}
Notice that $\widehat \psi^\zeta$ is regular on $S^2\setminus\{\zeta\}$. On the other hand, $t^\zeta_\infty$ is regular for all $z \in S^2\setminus\{\infty,\zeta\}$ and singular at the two omitted points.


\paragraph{Scattering amplitude.} Back in the half-space, translation invariance is broken along $y$, so normal modes  $\psi =\widehat\psi \ee^{\ii(k_x x + k_y y -\omega t)}$ are not solutions of the boundary problem, and $k_y$ is not conserved. However for $\kappa>0$ we shall consider such modes with $k_y=\kappa$ and $k_y=-\kappa$ as outgoing and incoming plane waves with respect to the boundary. Such modes have the same frequency since $\omega(k_x,\kappa) = \omega(k_x,-\kappa)$. There are actually two other values of $k_y$ with the same frequency, however they are purely imaginary and read
\begin{equation}\label{eq:def_kappa_ev}
	\kappa_\mathrm{\mathrm{ev}/\mathrm{div}}(k_x,\kappa) = \pm \ii \sqrt{\kappa^2 + 2 k_x^2 + \dfrac{1-2\nu f}{\nu^2}} \in \pm \ii \mathbb R_+\,,
\end{equation}
When $k_y=\kappa_\mathrm{ev}$ the evanescent mode is exponentially decaying as $y\to \infty$, whereas the diverging mode for $k_y=\kappa_\mathrm{div}$ diverges. 

\begin{rem}
	Another perspective is to look at the solutions  of
	$$
	 k_x^2 + k_y^2 + (f-\nu(k_x^2+k_y^2))^2 = \omega^2
	$$ for fixed $k_x \in \mathbb R$ and $\omega^2 > k_x^2 + (f-\nu k_x^2)^2$. Such an equation has four solutions: two are real and two purely imaginary. It turns out that they can all be expressed in terms of the real positive one. If we call the latter $\kappa >0$ we recover the expressions above.
\end{rem}

Let $U_{\mathrm{out}}\subset\mathbb{R}^2$ be an open subset, and let $U_{\mathrm{in}}\subset\mathbb{R}^2$ and $U_{\mathrm{ev}}\subset\mathbb{R}\times\ii \mathbb{R}$ be the images under the maps $(k_x,\kappa)\mapsto(k_x,-\kappa)$ and $(k_x,\kappa)\mapsto (k_x,\kappa_{\mathrm{ev}})$. Consider bulk section $\wpsi_\mathrm{in/out/ev}$ with momentum $k_x$ and $k_y= -\kappa$, $\kappa$ and $\kappa_\mathrm{ev}$ for $k_x,\kappa \in U_{\mathrm{out}}$. We assume the sections to be non-vanishing and regular on their respective domains, and that they are of amplitude one, namely $\langle \psi_{\mathrm{in}}(k_x,-\kappa),  \psi_{\mathrm{in}}(k_x,-\kappa) \rangle =1$ and similarly for out and ev.

The scattering state is the linear combination 
\begin{equation}\label{eq:scattering_solution}
	\psi_s = \big(\wpsi_\mathrm{in}\ee^{-\ii \kappa y} + S\wpsi_\mathrm{\mathrm{out}}\ee^{\ii \kappa y} + T\wpsi_\mathrm{ev} \ee^{\ii \kappa_\mathrm{ev} y}\big)\ee^{\ii (k_x x-\omega t)}
\end{equation}
which satisfies the boundary condition \eqref{eq:boundary_condition}. It has well defined momentum $k_x$ and frequency $\omega = \omega_+(k_x,\kappa)$. Such a state is uniquely defined up to multiples, and  for any self-adjoint boundary condition, the quantity $S(k_x,\kappa) \in U(1)$ is called the scattering amplitude, see \cite{GJT21}. It depends on the choice of bulk sections and on the boundary condition, so we shall rather write $S(k_x,\kappa,a)$.

\paragraph{Explicit expression and properties.}
For $\kappa >0$ we shall work with $\zeta = \ii$ (any  $\zeta = \ii \zeta_y$ with $\zeta_y>0$ would fit)  and 
\begin{equation}\label{S_explicit}
\wpsi_\mathrm{in} (k_x,-\kappa) = \wpsi^\zeta(k_x,-\kappa), \qquad \wpsi_\mathrm{out} (k_x,\kappa) = \wpsi^\zeta(k_x,\kappa),\qquad \wpsi_\mathrm{ev}(k_x,\kappa_{\mathrm{ev}}) = \wpsi^\infty(k_x,\kappa_{\mathrm{ev}})
\end{equation}
so that $U_{\mathrm{out}} = S^2\setminus\{\zeta\}$, $U_{\mathrm{in}} = S^2\setminus\{-\zeta\}$ and $U_\mathrm{ev} = S^2$. In particular $U_{\mathrm{in}} \cup U_{\mathrm{out}} = S^2$ and $U_{\mathrm{in}} \cap U_{\mathrm{out}} = S^2\setminus\{\pm\zeta\}$. Also notice that $\wpsi_{\mathrm{ev}}$ is regular for all $(k_x,\kappa) \in S^2$ (yet it is not a global section on $S^2$ because it has momentum $k_x,\kappa_{\mathrm{ev}} \neq k_x,\kappa$). Unless stated, we shall always work with this choice of sections to define $S$. 

For $\zeta \in S^2$ we write
$$
\wpsi^\zeta = \begin{pmatrix}
	\eta^\zeta \\
	u^\zeta\\
	v^\zeta
\end{pmatrix}, 
$$
so that the boundary condition \eqref{eq:boundary_condition} implies for the scattering state
\begin{align}
	&v^\zeta(-\kappa)+ S v^\zeta(\kappa)  + T v^\infty(\kappa_\mathrm{ev})  =0\\
	&k_x u^\zeta(-\kappa)-a \kappa v^\zeta(-\kappa) + S (k_x u^\zeta(\kappa)+a \kappa v^\zeta(\kappa))  + T (k_x u^\infty(\kappa_\mathrm{ev})+a \kappa_\mathrm{ev} v^\infty(\kappa_\mathrm{ev}) ) =0
\end{align}
where we omitted the $k_x$ dependence,
leading to
$$
	S(k_x,\kappa,a) = - \dfrac{g(k_x,-\kappa,a)}{g(k_x,\kappa,a)}, $$
where
\begin{equation}	\label{defg}
g(k_x,\kappa,a)=	\begin{vmatrix}
k_x u^\zeta(k_x,\kappa)+a \kappa v^\zeta(k_x,\kappa) & k_x u^\infty(k_x,\kappa_\mathrm{ev})+a \kappa_\mathrm{ev} v^\infty(k_x,\kappa_\mathrm{ev})\\
v^\zeta(k_x,\kappa) & v^\infty(k_x,\kappa_{\mathrm{ev}})
	\end{vmatrix}.
\end{equation}
Recall that $\kappa_{\mathrm{ev}}$ depends on $k_x,\kappa$, see \eqref{eq:def_kappa_ev}. It was shown in \cite{GJT21} that $S$ is well-defined (in particular $g$ is non-vanishing) as long as the boundary condition is self-adjoint and the sections $\wpsi_{\mathrm{in}}$ and $\wpsi_{\mathrm{out}}$ are regular. Thus $S(k_x,\kappa,a)$ is defined on 
$$
D_S = \Big(\mathbb R \times \mathbb  R_+^* \times \mathbb R \Big) \setminus \Big( \{(0,\kappa,0) \, | \, \kappa >0\} \cup \{ (0,\zeta_y,a) \, | \, a \in \mathbb R\}\Big)
$$ 
illustrated in Figure~\ref{fig:winding_numbers} below. Moreover it is easy to see from its expression that $S$ is smooth on the domain $D_S$.

\subsection{Number of edge modes and relative Levinson's theorem}
 
In order to study a relevant scattering amplitude, consider the following curve for any $\epsilon >0$
$$
\mathcal C_R^\epsilon= \{ (k_x,\kappa,a) = (R \cos \theta,\epsilon, R \sin \theta) \quad | \quad \theta \in \mathbb [-\pi, \pi] \}.
$$

\begin{prop}\label{prop:relative.Levinson}
	For any $\epsilon$ such that $0<\epsilon < \zeta_y$, the scattering amplitude is well-defined and satisfies 
	$$
	\lim_{\epsilon \to 0} \dfrac{1}{2\pi \ii} \int_{\mathcal C_R^\epsilon} S^{-1} \dd S = n_\mathrm{b}^+(R).
	$$
\end{prop}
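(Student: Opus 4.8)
The plan is to establish a relative Levinson's theorem connecting the winding number of the scattering amplitude $S$ along the boundary of a scattering region to the net number of edge modes that emerge from (or disappear into) the bulk band as $\theta$ traverses the loop. The key observation is that edge modes correspond to poles of the analytically-continued resolvent, or equivalently to zeros of a suitable scattering determinant, whereas extended scattering states at real $\kappa>0$ contribute the genuine $U(1)$-valued scattering phase. The curve $\mathcal{C}_R^\epsilon$ sits at fixed small $\kappa=\epsilon$ just above the band bottom, so its winding should capture precisely the edge modes that have detached from the band.

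\medskip

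First I would make precise the connection between edge modes and the function $g(k_x,\kappa,a)$ from Eq.~\eqref{defg}. An edge mode at parameter $(k_x,a)$ with energy $\omega\in(0,\omega_\theta)$ is a normalizable solution built purely from the evanescent mode $\wpsi_\mathrm{ev}$ satisfying the boundary condition; this forces a linear dependence that is detected by the vanishing of the analogous $2\times 2$ determinant where the oscillatory momentum $\kappa$ is replaced by an imaginary momentum. As $\kappa\to 0^+$ (equivalently $\omega\to\omega_\theta^-$, the band bottom), the edge mode either merges into or emerges from the bulk continuum exactly when $g(k_x,\kappa,a)$ develops a zero as $\kappa\to 0$. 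Thus the emerging/disappearing edge modes counted by $n_\mathrm{b}^+(R)$ in Definition~\ref{defn:edge.index} are in bijection with the sign-weighted zeros of $g$ that cross the $\kappa=\epsilon$ slice as $\theta$ varies and $\epsilon\to 0$.

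\medskip

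Next I would compute the winding integral directly. Writing $S=-g(k_x,-\kappa,a)/g(k_x,\kappa,a)$, we have
\begin{equation*}
\frac{1}{2\pi\ii}\int_{\mathcal{C}_R^\epsilon}S^{-1}\,\dd S=\frac{1}{2\pi\ii}\int_{\mathcal{C}_R^\epsilon}\Big(\dd\log g(k_x,-\kappa,a)-\dd\log g(k_x,\kappa,a)\Big).
\end{equation*}
Since $\mathcal{C}_R^\epsilon$ is a loop at fixed $\kappa=\epsilon$ parametrized by $\theta$, each term is the winding of $g$ around the corresponding loop. The strategy is then to take $\epsilon\to 0$ and apply an argument-principle count: in the limit, the two loops $\kappa=+\epsilon$ and $\kappa=-\epsilon$ coalesce onto the band-bottom circle, and the difference of windings picks up precisely the signed count of zeros of $g$ that cross between the $+\epsilon$ and $-\epsilon$ sides — i.e.\ the edge modes crossing the band edge. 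The sign convention (negative slope counted positively, per Definition~\ref{defn:edge.index.naive}) must be matched against the orientation of $\mathcal{C}_R^\epsilon$ and the direction in which zeros cross; this is where I would lean on the explicit expressions \eqref{eq:section_infty}--\eqref{eq:def_kappa_ev} to fix signs.

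\medskip

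The main obstacle I anticipate is controlling the $\epsilon\to 0$ limit uniformly and justifying that no winding contribution is lost or spuriously gained near the excluded points of $D_S$ (the punctures $\{(0,\kappa,0)\}$ and $\{(0,\zeta_y,a)\}$). Because $\mathcal{C}_R^\epsilon$ passes near $k_x=0$ where self-adjointness degenerates, I must verify that for $0<\epsilon<\zeta_y$ the loop stays within $D_S$ (which Proposition guarantees $S$ is well-defined there) and that the limit commutes with the integral — plausibly via the smoothness of $S$ on $D_S$ together with a degree/homotopy argument showing the winding is locally constant in $\epsilon$ away from $\epsilon=0$, with the jump at $\epsilon=0$ accounting exactly for band-edge crossings. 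Establishing this clean correspondence between the analytic zeros of $g$ and the geometric edge-mode branches of Definition~\ref{defn:edge.index}, with correct signs, is the technical heart of the argument; I would model it closely on the relative Levinson's theorem of \cite{GrafPorta} as adapted in \cite{GJT21}.
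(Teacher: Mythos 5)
Your proposal is correct and takes essentially the same route as the paper: the paper's entire proof of Proposition \ref{prop:relative.Levinson} is an appeal to the relative Levinson theorem of \cite[Thm 13]{GJT21} applied along the compact loop $\mathcal C_R^\epsilon$, and your sketch --- bound states as zeros of $g$ continued to imaginary $\kappa$, merging events at the band bottom producing quantized phase jumps of $S=-g(k_x,-\kappa,a)/g(k_x,\kappa,a)$ in the limit $\epsilon\to 0$, and the check that $0<\epsilon<\zeta_y$ keeps the loop inside $D_S$ --- is precisely the internal mechanism of that cited theorem. The only difference is that you unpack the citation rather than invoke it, which is faithful to the reference the paper relies on.
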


This is a consequence of the relative Levinson's theorem, see \cite[Thm 13]{GJT21}.  This general statement is true for any  {continuously parametrized compact domain}, such as {the loops $C_R^\epsilon$ parametrized by} $\theta$. As $\kappa \to 0$ the scattering amplitude is computed near the bottom of the upper band, and feels the bound states below it. As $\theta$ increases, it then counts the change in the number of bound states, that is the number of edge modes from the definition above. We illustrate it for specific values of $R$ in Figure~\ref{fig:relative_levinson}.

\begin{figure}[htb]
	\centering	
	\includegraphics[scale=0.45]{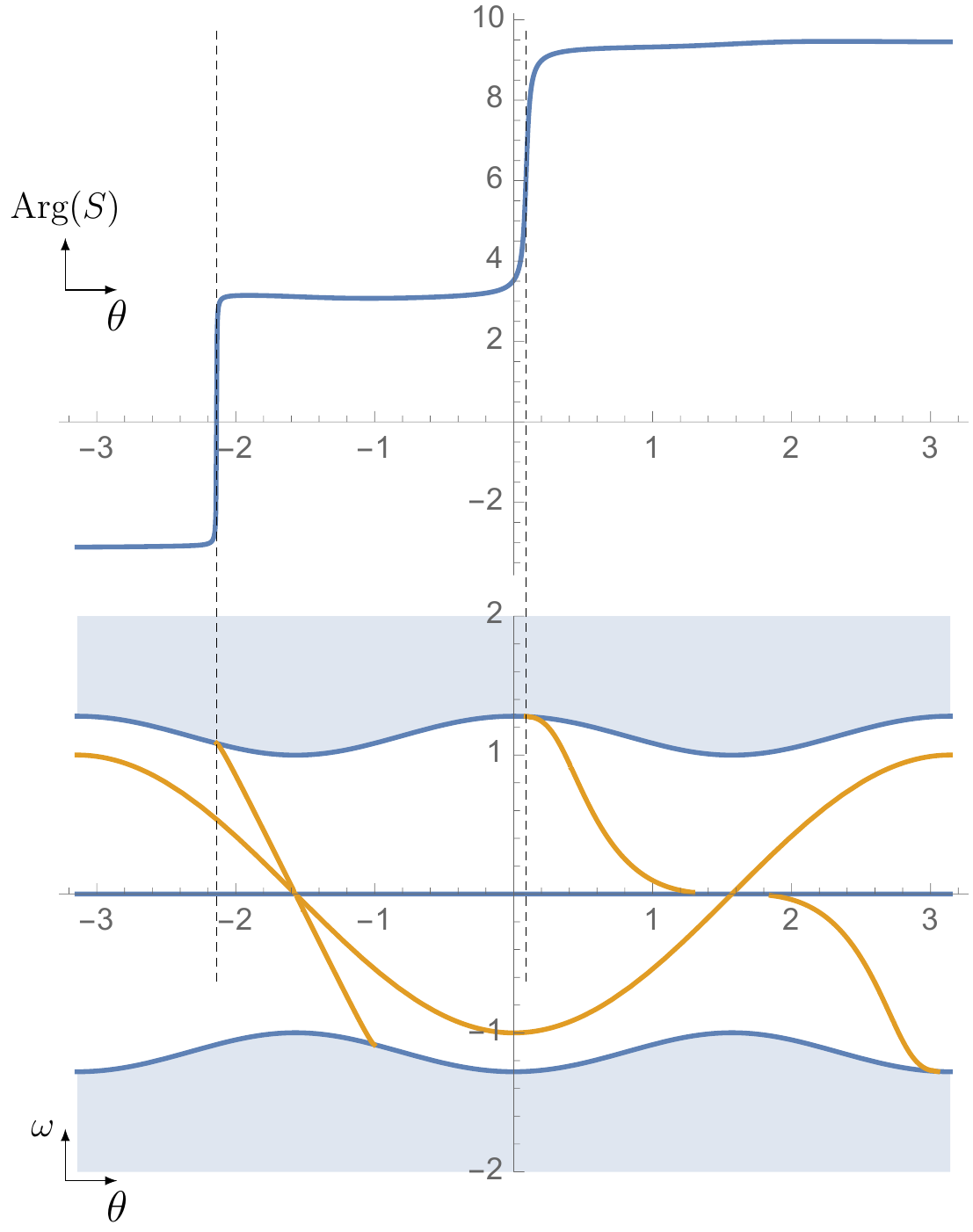} \includegraphics[scale=0.45]{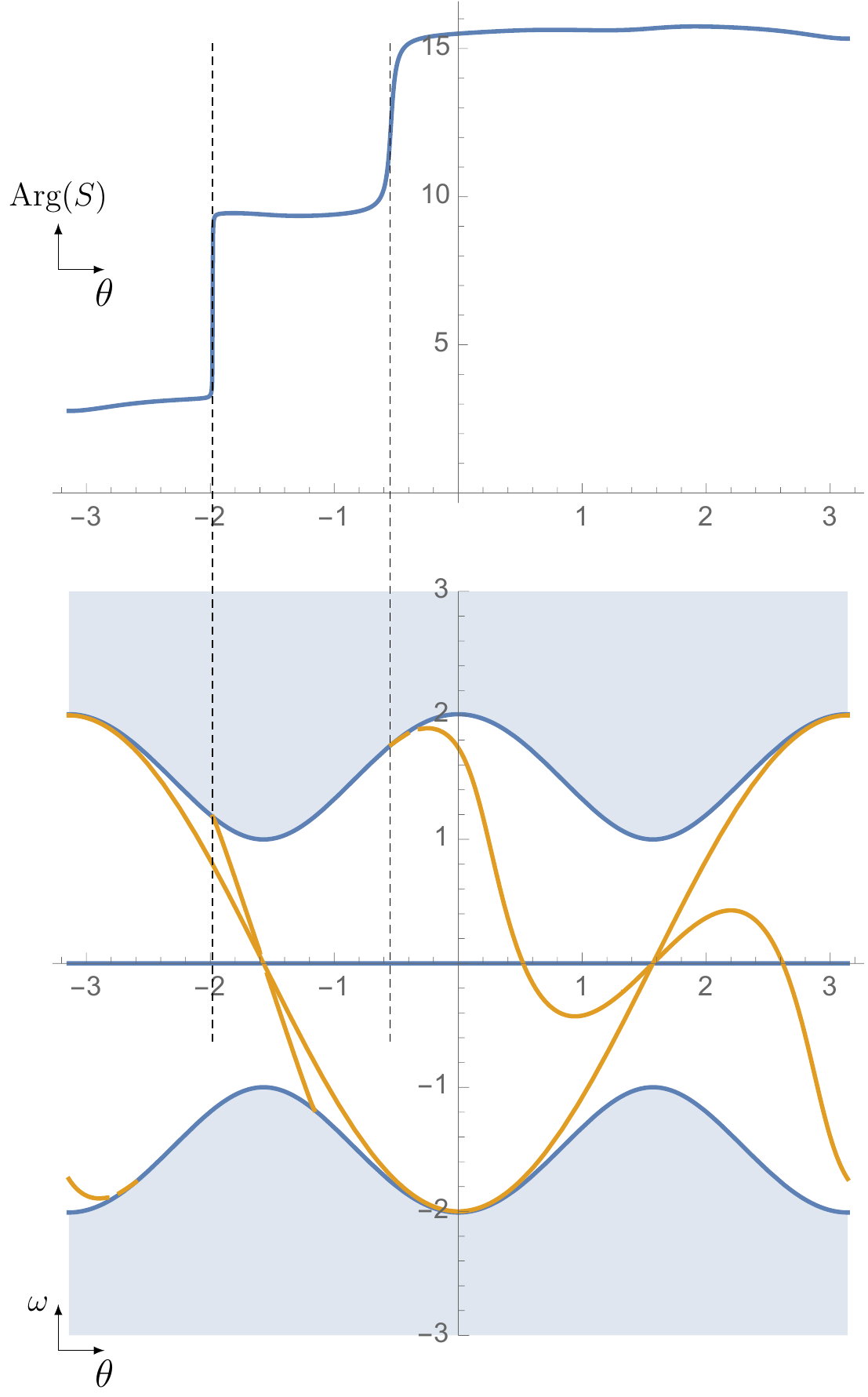} \includegraphics[scale=0.45]{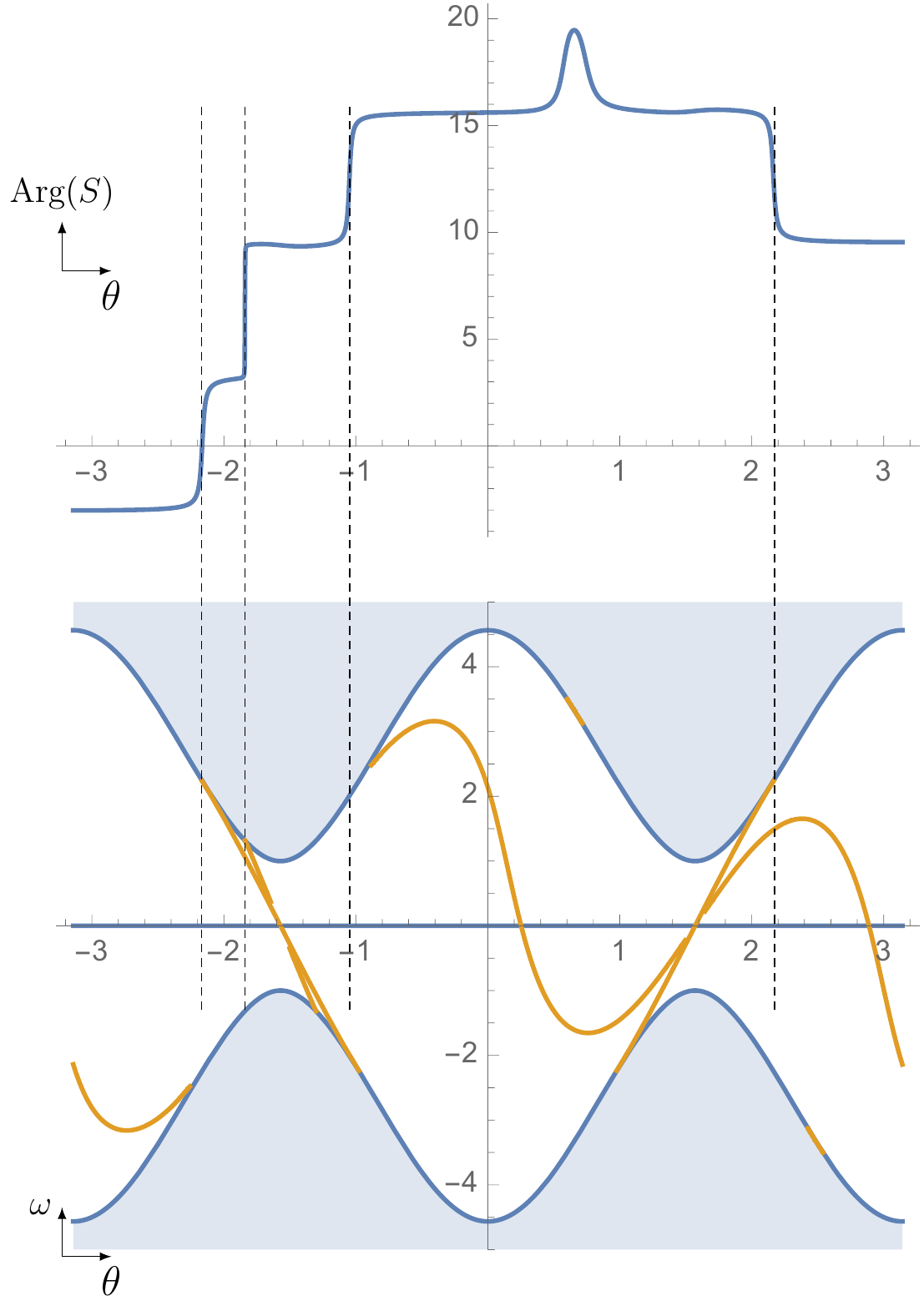}

	\caption{Lower panel : spectrum of $H^\sharp_{R,\theta}$ with respect to $\theta$ for $R=1$, $2$ and $4$ respectively.  Upper panel : argument of the scattering amplitude with respect to $\theta$ for $\epsilon = 0.05$. The argument jumps precisely when an edge mode branch disappears or emerges from the upper bulk band. The total change of argument of $S$ matches with the number of edge modes, that is always $2$.\label{fig:relative_levinson}}
\end{figure}

\subsection{Scattering and Chern number}

Here we consider a different curve in the parameter space. For $0<\delta<1$:
$$
\Gamma_{\delta,a_0} = \{(k_x,\kappa,a)  = (\delta\cos \alpha,\delta\sin \alpha + 1 ,a_0) \, | \, \alpha \in [-\pi,\pi]\}.
$$
This is a circle of radius $\delta$ in the $(k_x,\kappa)$ plane with $\kappa >0$, centered around $\zeta=(0,1)$. Any closed curve with that property would fit, in particular $\delta$ is not necessarily small here. Then we have
\begin{prop}
	For any $a_0 \neq 0$ and $\delta >0$,
	$$
	\dfrac{1}{2 \pi \ii } \int_{\Gamma_{\delta,a_0}} S^{-1}\dd S = C_+
	$$
	where $C_+$ is the Chern number of the upper band.
\end{prop}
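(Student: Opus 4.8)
The plan is to show that the winding of $S$ around $\Gamma_{\delta,a_0}$ is carried entirely by the transition function $t^\zeta_\infty$, which is exactly the quantity computing the upper-band Chern number. Since the boundary condition along $\Gamma_{\delta,a_0}$ is self-adjoint and the sections $\wpsi_{\mathrm{in}},\wpsi_{\mathrm{out}}$ are regular (the loop lies in $D_S$ precisely because $\delta>0$ avoids the center $\zeta=(0,1)$ and $a_0\neq 0$ avoids the excluded ray $\{(0,\kappa,0)\}$), we have $|S|=1$ throughout, so the integral is a well-defined integer that is invariant under homotopies of the loop inside $D_S$. In particular it is independent of $\delta\in(0,1)$, and I would evaluate it in the limit $\delta\to 0$, where $\Gamma_{\delta,a_0}$ collapses onto $\zeta$.

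The key algebraic step is to factor the transition function out of $g$. Because $\wpsi^\zeta=t^\zeta_\infty\,\wpsi^\infty$ componentwise, the first column of the determinant $g$ in Eq.~\eqref{defg} equals $t^\zeta_\infty(k_x,\kappa)$ times the column built from $\wpsi^\infty$; pulling out this scalar gives $g(k_x,\kappa,a)=t^\zeta_\infty(k_x,\kappa)\,\tilde g(k_x,\kappa,a)$, where $\tilde g$ is the determinant obtained by replacing $\wpsi^\zeta$ with $\wpsi^\infty$ in the first column. Since $\tilde g$ is assembled purely from $\wpsi^\infty$, which is regular on all of $\CC$, it is smooth near both $\zeta$ and $-\zeta$, and $|\tilde g|=|g|\neq 0$ away from $\zeta$. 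Writing $S=-g(k_x,-\kappa,a)/g(k_x,\kappa,a)$ and taking logarithmic derivatives,
\begin{equation*}
\frac{1}{2\pi\ii}\int_{\Gamma_{\delta,a_0}} S^{-1}\dd S = \frac{1}{2\pi\ii}\int_{\Gamma_{\delta,a_0}}\Big(\dd\log t^\zeta_\infty(k_x,-\kappa)-\dd\log t^\zeta_\infty(k_x,\kappa)+\dd\log\tilde g(k_x,-\kappa,a_0)-\dd\log\tilde g(k_x,\kappa,a_0)\Big).
\end{equation*}

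Next I would evaluate the four windings. Parametrizing $z=k_x+\ii\kappa=\zeta+\delta e^{\ii\alpha}$, the factor $t^\zeta_\infty(k_x,\kappa)=(\bar z-\bar\zeta)/(z-\zeta)=e^{-2\ii\alpha}$ contributes winding $-2$ as $\alpha$ runs over $[-\pi,\pi]$. By contrast the conjugate point $k_x-\ii\kappa$ stays within distance $\delta<1$ of $-\zeta$, hence at distance $>1$ from the singularity $\zeta$ of $t^\zeta_\infty$, so $t^\zeta_\infty(k_x,-\kappa)$ has winding $0$. The two $\tilde g$ terms also vanish: as $\delta\to 0$ the loops $\tilde g(k_x,\pm\kappa,a_0)$ shrink to the nonzero values $\tilde g(\zeta,a_0)$ and $\tilde g(-\zeta,a_0)$, each confined to a contractible neighbourhood of a point of $\CC^*$. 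Altogether the right-hand side equals $-(-2)=2$. Since $C_+=2$ is precisely the winding of the transition function $t^\zeta_\infty$ around $\zeta$ in the computation of the upper-band Chern number of \cite{TDV19,GJT21}, this yields $\frac{1}{2\pi\ii}\int_{\Gamma_{\delta,a_0}}S^{-1}\dd S=C_+$.

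The main obstacle is justifying that the $\tilde g$ windings are genuinely trivial, i.e.\ that $\tilde g$ is non-vanishing in neighbourhoods of $\zeta$ and $-\zeta$. Away from $\zeta$ this is automatic from $|\tilde g|=|g|\neq0$, so the only real content is continuity up to $\zeta$ together with $\tilde g(\zeta,a_0)\neq 0$. Evaluating the determinant at $k_x=0,\kappa=1$ produces an expression proportional to $a_0\,v^\infty(\zeta)\,v^\infty(\kappa_{\mathrm{ev}})(1-\kappa_{\mathrm{ev}})$, which is nonzero \emph{precisely because} $a_0\neq 0$ (the remaining factors being nonzero from the explicit form of $\wpsi^\infty$ and from $\kappa_{\mathrm{ev}}\in\ii\RR_+$); an analogous evaluation handles $-\zeta$. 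This simultaneously explains the hypothesis $a_0\neq 0$ in the statement. The remaining bookkeeping is fixing the global orientation so that the transition-function winding registers as $+C_+$ rather than $-C_+$, which is dictated by the orientation conventions already in force for $\mathcal C_R^\epsilon$ and for the compactified $k$-plane.
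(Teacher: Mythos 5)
Your proposal is correct, but it takes a genuinely more computational route than the paper. The paper's proof invokes the bulk-scattering correspondence of \cite[Thm 9]{GJT21}: $S$ is interpreted as the transition function between the abstract out-section $\mathcal{S}\wpsi_{\mathrm{in}}$ of the $\omega_+$-eigenbundle and the chosen reference section, and since $\Gamma_{\delta,a_0}$ splits $S^2$ into an interior contained in $U_{\mathrm{in}}$ and an exterior contained in $U_{\mathrm{out}}$, the winding of this clutching function along the loop \emph{is} the Chern number --- the only thing left to ``check'' is that $\zeta$ lies inside the loop. You bypass the bundle interpretation entirely: the column-multilinearity factorization $g=t^\zeta_\infty\,\tilde g$ in \eqref{defg}, together with the verification that $\tilde g$ extends continuously and non-vanishingly across $\pm\zeta$, reduces the winding of $S$ to that of $t^\zeta_\infty(z)=e^{-2\ii\alpha}$ sitting in the denominator, giving $+2$; your pointwise evaluations are right ($\tilde g(\pm\zeta,a_0)=\mp a_0\,v^\infty(0,\pm 1)\,v^\infty(0,\kappa_{\mathrm{ev}})(1\mp\kappa_{\mathrm{ev}})$ with $v^\infty(0,\pm 1)=v^\infty(0,\kappa_{\mathrm{ev}})=\ii$ and $1\mp\kappa_{\mathrm{ev}}\neq 0$ since $\kappa_{\mathrm{ev}}\in\ii\RR_+$, while $t^\zeta_\infty(k_x,-\kappa)$ winds trivially because its argument stays at distance at least $2-\delta$ from the singularity). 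A genuine merit of your route is that it makes explicit where $a_0\neq 0$ enters ($\tilde g(\pm\zeta,a_0)\propto a_0$), something the paper delegates to \cite{GJT21}; it is also structurally the same small-radius expansion the paper itself performs in Section~\ref{sec:spec.Chern} for the loops $\ell_\alpha$, so the two computations complement each other. The trade-off is that the paper's argument is structural: it identifies the winding with $C_+$ for any admissible sections and any loop with the splitting property, \emph{without} knowing the value of $C_+$, whereas you prove the winding equals $2$ and then appeal to the known value $C_+=2$ from \cite{TDV19,GJT21} --- logically sufficient for the stated proposition, but not a derivation of the correspondence itself. Two minor points: the radius must satisfy $0<\delta<1$ (as in the paper's definition of $\Gamma_{\delta,a_0}$, despite the proposition's looser ``$\delta>0$'') so that $\kappa>0$ along the loop, which your homotopy argument implicitly respects; and the winding of $t^\zeta_\infty$ around $\zeta$ is $-2$ with the stated orientation, the flip to $+2=C_+$ coming precisely from its position in the denominator, as your own computation shows --- so the closing remark about orientation ``bookkeeping'' is already settled by your calculation and needs no further argument.
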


This is called the bulk-scattering correspondence, and is proved in \cite[Thm 9]{GJT21}. The proof goes as follows. Pick a section $\wpsi_{\mathrm{in}}$ on $U_\mathrm{in} \cap \{\kappa <0\}$. The boundary condition required in the definition of the scattering state naturally defines a scattering map $\mathcal S : \mathcal E_{k_x,-\kappa} \to \mathcal E_{k_x,\kappa}$ where $\mathcal E_{k_x,\kappa}$ is the fiber above $(k_x,\kappa)$. Thus one naturally has the abstract section {denoted by $\widecheck \psi_{\mathrm{out}}$}, and given by $\widecheck \psi_{\mathrm{out}}(k_x,\kappa) = \mathcal S \wpsi_{\mathrm{in}}(k_x,-\kappa)$. {On the other hand, according to \eqref{S_explicit} and \eqref{eq:scattering_solution}} one uses a common reference section for in and out on $U_\mathrm{in} \cap U_\mathrm{out}$, namely $\wpsi_{\mathrm{in}} = \wpsi_{\mathrm{out}}$ so {that the abstract section is actually proportional to $\wpsi_{\mathrm{in}}$. Explicitly,} we have $\widecheck \psi_{\mathrm{out}}(k_x,\kappa) = S(k_x,\kappa,a) \wpsi_{\mathrm{in}}(k_x,-\kappa)$. Thus $S$ is nothing but a transition function between in and out sections. Consequently, if $U_\mathrm{in} $ and $U_\mathrm{out}$ cover $S^2$ and if the loop $\Gamma_{\delta,a_0}$ splits $S^2$ with interior contained in $U_\mathrm{in}$ and exterior contained in $U_\mathrm{out}$, then the winding of $S$ along it is the Chern number.  One can check that this is the case with the choice \eqref{S_explicit} for $S$ as long as $\zeta$ is inside $\Gamma_{\delta,a_0}$.

\subsection{The correspondence}\label{sec:spec.Chern}

The main claim in this section, together with the two propositions above, implies $n_b^+(R) = C_+$, and ends the proof of Theorem~\ref{thm:naive_BEC}. 

\begin{prop} The scattering amplitude $S$ satisfies:
$$
\dfrac{1}{2\pi \ii} \int_{\mathcal C_R^\epsilon} S^{-1} \dd S =	\dfrac{1}{2 \pi \ii } \int_{\Gamma_{\delta,a_0}} S^{-1}\dd S  
$$
for any $\epsilon$ sufficiently small for the left hand side to be well defined.
\end{prop}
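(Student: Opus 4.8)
The plan is to read both sides as winding numbers of the smooth map $S\colon D_S\to U(1)$ restricted to the respective loops: writing $\frac{1}{2\pi\ii}S^{-1}\dd S=\frac{1}{2\pi}\dd(\arg S)$, this is a closed $1$-form on $D_S$ representing the pullback under $S$ of the generator of $H^1(U(1);\ZZ)$, so each integral depends only on the homology class of its loop in $D_S$. First I would pin down the topology of the domain. After the homeomorphism $(k_x,\kappa,a)\mapsto(k_x,\log\kappa,a)$, the space $D_S$ becomes $\RR^3$ with two coplanar lines removed: $L_1=\{k_x=a=0\}$ (the locus where self-adjointness degenerates) and $L_2=\{(k_x,\kappa)=(0,\zeta_y)\}$ (the locus where the reference section $\wpsi^\zeta$ is singular), crossing transversally at the single point $\vect p=(0,\zeta_y,0)$. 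In these coordinates $\mathcal C_R^\epsilon$ is a meridian of $L_1$ (it links $L_1$ once and $L_2$ not at all, since it sits at $\kappa=\epsilon<\zeta_y$), while $\Gamma_{\delta,a_0}$ is a meridian of $L_2$ (linking $L_2$ once and $L_1$ not at all, since $a_0\neq0$).

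The main obstacle is that these two loops are \emph{not} homotopic in $D_S$: their linking numbers with $L_1$ differ, so no homotopy through $D_S$ can carry one to the other, and the equality of windings is therefore not a formal consequence of homotopy invariance alone. To localize the comparison I would slide each loop, within $D_S$, along its own defect line towards the crossing point $\vect p$. This is legitimate because $\frac{1}{2\pi\ii}S^{-1}\dd S$ is closed, so the winding around a meridian of $L_i$ is locally constant as the base point moves along $L_i$ away from $\vect p$; thus $\mathcal C_R^\epsilon$ is carried to a small meridian of $L_1$ just below $\vect p$ and $\Gamma_{\delta,a_0}$ to a small meridian of $L_2$, both lying on a small sphere $\Sigma_\rho$ centred at $\vect p$, with unchanged windings. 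On $\Sigma_\rho\setminus(L_1\cup L_2)$, a sphere minus four points, the form is closed, so Stokes shows that the four meridian windings sum to zero; this is the only purely topological relation available, and by itself it is not enough to conclude.

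The crux is then the local evaluation of these windings from the explicit amplitude $S=-g(k_x,-\kappa,a)/g(k_x,\kappa,a)$ near $\vect p$. Expanding the determinant $\eqref{defg}$, along $L_1$ one finds $g(k_x,\pm\kappa,a)=k_x\,A(\pm\kappa)+a\,B(\pm\kappa)$ to leading order, an $\RR$-linear function of $(k_x,a)$ whose winding around $(0,0)$ is $\pm1$ according to the orientation (the sign of the real $2\times2$ determinant built from $A$ and $B$); the $L_1$-winding of $S$ is the difference of these two signs and reproduces $n_\mathrm{b}^+(R)$. Near $L_2$ the singular behaviour is instead carried entirely by the transition factor $t^\zeta_\infty=\overline{(z-\zeta)}/(z-\zeta)$ inside $\wpsi^\zeta=t^\zeta_\infty\wpsi^\infty$, which winds like $\bar w/w$ in the local coordinate $w=k_x+\ii(\kappa-\zeta_y)$ and reproduces $C_+$. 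The delicate point is that at $\vect p$ itself $\kappa\to\zeta_y$, so the coefficients $A,B$ are themselves governed by this same section singularity; carrying out the joint leading-order analysis on $\Sigma_\rho$ shows that the two mechanisms feed a single monodromy, forcing the $L_1$-winding and the $L_2$-winding of $S$ to coincide. Keeping careful track of orientations and signs in this local computation is the part I expect to be most laborious; once it is done, the proposition, and with it the identity $n_\mathrm{b}^+(R)=C_+$, follows.
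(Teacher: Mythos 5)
Your topological framing coincides with the paper's: both integrals are windings of the smooth map $S:D_S\to U(1)$; the two loops are meridians of the two singular lines $\{k_x=a=0\}$ and $\{(k_x,\kappa)=(0,\zeta_y)\}$; they are \emph{not} homotopic in $D_S$ (their linking numbers differ), so the comparison must be localized at the crossing point $(0,\zeta_y,0)$ where loss of self-adjointness and loss of section regularity meet. The paper implements exactly this localization: it deforms the loops to squares $\mathcal{C}$ and $\Gamma$ so that the difference of windings becomes the winding along a single loop $L$ lying in the transverse half-plane $\mathcal{P}=\{\kappa+a=1,\,\kappa>0\}$ and encircling $(0,1,0)$ --- the same reduction you achieve with your small sphere $\Sigma_\rho$ and the Stokes relation among the four meridians.

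However, there is a genuine gap at the crux. Your concluding claim --- that "carrying out the joint leading-order analysis on $\Sigma_\rho$ shows that the two mechanisms feed a single monodromy, forcing the $L_1$-winding and the $L_2$-winding of $S$ to coincide" --- is an assertion of precisely the proposition to be proved, with no computation behind it. As you yourself observe, topology alone (the four-meridian relation) cannot decide the matter; everything rests on the local expansion of $g$ near the crossing point, and your text never establishes what that expansion yields. The paper does: along $\ell_\alpha=\{(\alpha\cos\theta,\,1-\alpha\sin\theta,\,\alpha\sin\theta)\}$ it computes the limits of $u^\zeta,v^\zeta$ and of $u^\infty(k_x,\kappa_{\mathrm{ev}}),v^\infty(k_x,\kappa_{\mathrm{ev}})$, obtaining $g\sim-\ii\alpha\,\ee^{2\ii\theta}\bigl(A\cos\theta-(B+\ii)\sin\theta\bigr)$ and $g(k_x,-\kappa,a)\sim\ii\alpha\bigl(A\cos\theta-(B-\ii)\sin\theta\bigr)$ with constants $A,B>0$ depending on $f,\nu$, whence $\lim_{\alpha\to0}S=\ee^{-2\ii\theta}\,w/\bar{w}$ with $w=A\cos\theta-B\sin\theta+\ii\sin\theta$. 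The section factor contributes winding $-2$, while $w/\bar w$ contributes $+2$ because the linear map underlying $w$ has determinant $A>0$; the total winding is $0$. This positivity/nondegeneracy is what guarantees both that $g$ is nonvanishing on small $\ell_\alpha$ (so the winding is defined at all) and that the two contributions cancel rather than add --- a priori your "single monodromy" could just as well have produced winding $\pm4$, which would refute the proposition rather than prove it. So your skeleton is the right one, indeed the paper's own, but the proposal is incomplete until the explicit local computation, including the orientation and sign bookkeeping you explicitly defer, is actually carried out.
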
 

\begin{proof}
The scattering amplitude is defined explicitly through \eqref{S_explicit}, which ensures that it is well defined and smooth for $\kappa >0$ and away from $k_x=a=0$ and $k_x=0, \kappa = \zeta_y$. We summarize all that in the left panel of Figure~\ref{fig:winding_numbers}.

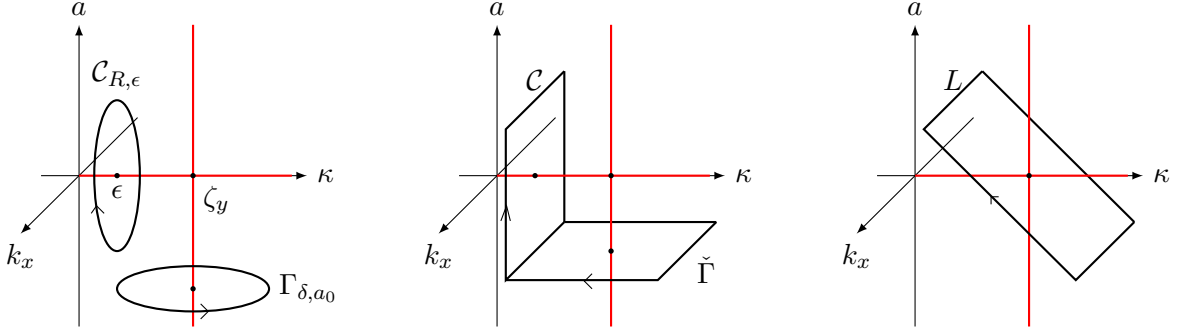
\begin{figure}[htb]
	\centering
\begin{tikzpicture}
	\draw[-latex] (-0.5,0,0) -- (3,0,0) node[right]{$\kappa$};
		\draw[-latex] (0,-2,0) -- (0,2,0) node[above]{$a$};
		\draw[-latex] (0,0,-2) -- (0, 0,2) node[below]{$k_x$};
		\draw[red,thick] (0,0,0) -- (2.8,0,0);
		\draw[red,thick] (1.5,-2,0) -- (1.5,2,0);
		\draw[thick] (0.5,0) circle[x radius=0.3cm, y radius=1cm];
		\fill (0.5,0) circle (1pt);
				\draw (0.5,0) node[below]{$\epsilon$};
		
		\draw[thick] (1.5,-1.5) circle[x radius=1cm, y radius=0.3cm];
		\fill (1.5,-1.5) circle (1pt);
		
		\draw (2.5,-1.5) node [right]{$\Gamma_{\delta,a_0}$};
		\draw (0.5,1) node [above]{$\mathcal C_{R,\epsilon}$};
		\draw (1.5,0) node[below right]{$\zeta_y$};
				\fill (1.5,0) circle (1pt);
				
		\draw (0.16,-0.5)--(0.23,-0.4) -- (0.33,-0.5);
		\draw (1.6,-1.7) -- (1.7,-1.8) -- (1.6,-1.9);
		\begin{scope}[xshift=5.5cm]
			
\draw[thick] (0.5,-1,-1) -- (0.5,1,-1);
\draw[thick] (2.5,-1,-1)-- (0.5,-1,-1);
			
	\draw[-latex] (-0.5,0,0) -- (3,0,0) node[right]{$\kappa$};
\draw[-latex] (0,-2,0) -- (0,2,0) node[above]{$a$};
\draw[-latex] (0,0,-2) -- (0, 0,2) node[below]{$k_x$};
\draw[red,thick] (0,0,0) -- (2.8,0,0);
\draw[red,thick] (1.5,-2,0) -- (1.5,2,0);

\fill (0.5,0) circle (1pt);

\fill (1.5,-1) circle (1pt);

\draw (2.5,-1.25) node [right]{$\check\Gamma$};
\draw (0.5,1) node [above]{$\mathcal C$};
\fill (1.5,0) circle (1pt);

\draw[thick] (0.5,-1,-1) -- (0.5,-1,1) -- (0.5,1,1) -- (0.5,1,-1);
\draw[thick] (0.5,-1,1) -- (2.5,-1,1) -- (2.5,-1,-1);

\draw (1.25,-1.3)  -- (1.15,-1.4) -- (1.25,-1.5) ;
\draw (0.05, -0.6) -- (0.125,-0.4) -- (0.175,-0.6);
		\end{scope}

			\begin{scope}[xshift=11cm]
				\draw[thick] (0.5,1,-1) -- (2.5,-1,-1) ;
				\draw[-latex] (-0.5,0,0) -- (3,0,0) node[right]{$\kappa$};
				\draw[-latex] (0,-2,0) -- (0,2,0) node[above]{$a$};
				\draw[-latex] (0,0,-2) -- (0, 0,2) node[below]{$k_x$};
				\draw[red,thick] (0,0,0) -- (2.8,0,0);
				\draw[red,thick] (1.5,-2,0) -- (1.5,2,0);

				\fill (1.5,0) circle (1pt);

				\draw[thick] (0.5,1,-1) -- (0.5,1,1) -- (2.5,-1,1) -- (2.5,-1,-1);
			\draw (0.5,1) node [above]{$L$};
			\draw   (1,-0.38) -- (1,-0.28) -- (1.1,-0.28);
				
			\end{scope}
\end{tikzpicture}
\caption{The winding of $S$ along $\mathcal C_R^\epsilon$ tends to $n_b^+(R)$, and the winding along $\Gamma_{\delta,a_0}$ gives the Chern number $C_+$. Up to some continuous deformation, the difference of such winding numbers is encoded in the one around the loop $L$, that turns out to be zero.\label{fig:winding_numbers}}
\end{figure}

Apart from the region where $S$ is singular (red lines and $\kappa \leq 0$), the curves can be continuously deformed leaving the winding numbers unchanged. Thus, without loss of generality, we can set $a_0=-1$, $\zeta_y=1$, $\epsilon <1$ and $\delta<1$, and then deform $\mathcal C_R^\epsilon$ to a square loop of center $(0,\epsilon,0)$ and length $1$ in the plane $\kappa = \epsilon$, called $\mathcal C$. Similarly, we deform $\Gamma_{\delta,a_0}$ to a square loop of center $(0,1,-1)$ and length $1$ in the plane $a=-1$, called $\Gamma$. $\check \Gamma$ is the same loop but with reversed orientation. See the middle panel of Figure~\ref{fig:winding_numbers}. Then we have
$$
\int_\mathcal{C} S^{-1} \dd S - \int_\Gamma S^{-1} \dd S  = \int_L S^{-1} \dd S
$$
where $L$ is a deformation of $\mathcal C \cup \check\Gamma$ which lies in the half-plane $\mathcal P$ defined by $\kappa+a=1$, $\kappa >0$ (any half-plane of the form $\lambda \kappa +a = \lambda$, $\kappa >0$ for some $\lambda >0$ would actually fit). See the right panel of Figure~\ref{fig:winding_numbers}. Thus the winding numbers of $S$ along $\mathcal C_R^\epsilon$ and $\Gamma_{\delta,a_0}$ coincide if the winding of $S$ along any loop in $\mathcal P$ around $(0,1,0)$ vanishes. Consider for example for $0<\alpha<1$
$$
\ell_\alpha = \big\{(\alpha \cos(\theta), 1-\alpha \sin(\theta), \alpha \sin(\theta) ) \, | \, \theta \in [0,2\pi] \big\},
$$
which is homotopic to $L$ in $\mathcal P$ up to orientation.

Now we expand \eqref{defg} on $\ell$  near $\alpha = 0$. We use the fact that, for $k_x = \alpha \cos(\theta)$ and $k_y = 1-\alpha \sin(\theta)$,
$$
\lim_{\alpha \to 0} u^\zeta =  \dfrac{f-\nu}{\sqrt{1+(f-\nu)^2}}\ee^{2 \ii \theta}, \qquad\lim_{\alpha \to 0} v^\zeta  = \ii \ee^{2 \ii \theta}.
$$
Moreover one has $
 \kappa_{\mathrm{ev}} \to  \ii \sqrt{1+\frac{1}{\nu^2}-\frac{2f}{\nu}} $ and 
$$
\lim_{\alpha\to 0} u^\infty(k_x,\kappa_{\mathrm{ev}}) = \dfrac{1-f\nu+\nu^2}{\nu\sqrt{1+(f-\nu)^2}}, \qquad \lim_{\alpha\to 0} v^\infty(k_x,\kappa_{\mathrm{ev}}) = \ii.
$$
Consequently, one has to the leading order in $\alpha$
$$
g \sim - \ii \alpha\ee^{2\ii \theta} (A \cos\theta - (B+\ii) \sin(\theta)) 
$$
where $A,B>0$ are positive constant depending on $f$ and $\nu$. We then perform a similar computation for $g$ but flip $\kappa \to -\kappa$ so that $k_x = \alpha \cos(\theta)$ and $k_y = -1+\alpha \sin(\theta)$. To leading order near $\alpha=0$ we get
$$
g \sim  \ii \alpha (A \cos\theta - (B-\ii) \sin(\theta)) 
$$
Recalling that $S(k_x,\kappa,a) = -\frac{g(k_x,-\kappa,a)}{g(k_x,\kappa,a)}$ we deduce
$$
\lim_{\alpha\to 0} S = \ee^{-2\ii \theta} \dfrac{A \cos\theta - (B-\ii) \sin(\theta)}{A \cos\theta - (B+\ii) \sin(\theta)}.
$$
Although this limit depends on $\theta$, it is easy to see that the winding number of the expression on the right hand side vanishes, and so does the winding of $S$ along any $\ell_\alpha$, which ends the proof. 

\end{proof}

\section{Full spectral flow structure \label{sec:sf.structure}}

As discussed in Section~\ref{sec:main_stability}, the stability of the edge index requires a generalization of spectral flow from Definition~\ref{defn:edge.index.naive} to take into account pathological edge mode branches that may occur when perturbing the system. 

Let $\mathscr{F}^{\rm sa}$ denote the space of self-adjoint (possibly unbounded) Fredholm operators, and equip it with the topology of norm-resolvent convergence. In what follows, the characteristic function on an interval $[a,b]\subset\mathbb{R}$ is denoted $\chi_{[a,b]}$.

\begin{propdef}[\cite{P96,BLP05}]\label{defn:spectral.flow}
 Let $F:I=[0,2\pi]\rightarrow\mathscr{F}^{\rm sa}$ be a continuous path. There exists a partition of $I$,
$
		0=\theta_0<\theta_1<\ldots<\theta_N=2\pi,
$
	and corresponding $a_i>0, i=1,\ldots,N$, such that each subpath of spectral projections
	\begin{equation*}
		[\theta_{i-1},\theta_i]\ni\theta\mapsto \chi_{[-a_i,a_i]}(F(\theta))
	\end{equation*}
	is finite-rank and (norm-)continuous.
	For $\theta\in [\theta_{i-1},\theta_i]$, write $r^+_i(\theta):={\rm dim}\,\chi_{[0,a_i]}(F(\theta))$. The spectral flow $F$ (across 0 energy) is defined to be
		\begin{equation}
			{\rm Sf}(F)\equiv {\rm Sf}(\{F(\theta)\}_{\theta\in I}):=\sum_{i=1}^N r^+_i(\theta_i)-r^+_i(\theta_{i-1}).\label{eq:sf.definition}
		\end{equation}	
	 This integer is independent of the choices of partition and $a_i$, and is a homotopy invariant which is additive under concatenation of paths.
\end{propdef}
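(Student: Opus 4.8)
The plan is to reconstruct the standard Phillips construction (\cite{P96} in the bounded case, extended to the unbounded, norm-resolvent setting in \cite{BLP05}) from its two ingredients: a \emph{local existence lemma} producing finite-rank continuous spectral cut-offs, upgraded to a finite partition by compactness of $I$, followed by a purely combinatorial verification that the resulting integer is independent of all choices and behaves correctly under concatenation and homotopy. Throughout I would exploit that norm-resolvent continuity of $\theta\mapsto F(\theta)$ is equivalent to norm-continuity of $\theta\mapsto f(F(\theta))$ for every $f\in C_0(\RR)$: indeed $C_0(\RR)$ is generated in sup-norm by the resolvent functions $x\mapsto(x\pm\ii)^{-1}$ (a Stone--Weierstrass argument on the one-point compactification), on which continuity is the very definition. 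One may equivalently pass to the bounded transform $F\mapsto F(1+F^2)^{-1/2}$, a homeomorphism onto its image that reduces everything to the bounded case; I would mention this as an alternative but argue directly.

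\medskip

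The heart of the matter is the local lemma: for each $\theta_*\in I$ there exist $a>0$ and a neighbourhood $N$ of $\theta_*$ with $\pm a\notin\sigma(F(\theta))$ and $[-a,a]\cap\sigma_{\mathrm{ess}}(F(\theta))=\emptyset$ for all $\theta\in N$, so that $\chi_{[-a,a]}(F(\theta))$ is finite-rank and norm-continuous on $N$. To prove it I would first invoke the Fredholm hypothesis: $0\notin\sigma_{\mathrm{ess}}(F(\theta_*))$, so a gap $[-a,a]$ avoiding the essential spectrum exists, with $\pm a$ taken in the resolvent set. Persistence of this gap is the single genuinely analytic point: I would track the image of $(F(\theta)\pm\ii)^{-1}$ in the Calkin algebra (equivalently $\pi(F(\theta))$ under the bounded transform), which is norm-continuous and invertible at $\theta_*$, so invertibility with controlled inverse-norm persists on some $N$, forcing the essential-spectrum gap to persist there; shrinking $a$ slightly keeps $\pm a$ off the discrete spectrum uniformly on $N$. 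Finiteness of $\mathrm{rank}\,\chi_{[-a,a]}(F(\theta))$ then follows since $[-a,a]$ contains only isolated eigenvalues of finite multiplicity, and norm-continuity follows by applying the $C_0$-functional calculus to a fixed continuous bump equal to $1$ on a slightly smaller window and supported in $(-a,a)$.

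\medskip

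Given the local lemma, compactness of $I=[0,2\pi]$ yields a finite subcover, hence a partition $0=\theta_0<\dots<\theta_N=2\pi$ with constants $a_i$ as claimed, establishing the existence part and making each $r_i^+(\theta)=\dim\chi_{[0,a_i]}(F(\theta))$ a finite integer. For well-definedness I would argue by comparison. \emph{Independence of the $a_i$:} if two admissible levels $a<a'$ both work on a subinterval, then $\chi_{(a,a']}(F(\theta))$ is a norm-continuous finite-rank projection there, hence of constant rank, so it contributes equally at both endpoints and cancels in the telescoping difference $r_i^+(\theta_i)-r_i^+(\theta_{i-1})$. \emph{Independence of the partition:} any two partitions admit a common refinement, and inserting an extra node inside a subinterval leaves the sum unchanged because, using the same $a_i$ on both halves, the contributions telescope. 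Additivity under concatenation is then immediate, since the concatenated path is covered by the union of the two partitions.

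\medskip

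For homotopy invariance, given a continuous $H:I\times[0,1]\to\mathscr{F}^{\mathrm{sa}}$ I would apply the local lemma on the compact square $I\times[0,1]$ to obtain a finite grid on which all relevant cut-off projections are finite-rank and continuous; the spectral flow of the horizontal slices is then an integer-valued function of the vertical parameter that is locally constant (two nearby slices admit common admissible $a_i$, so their flows coincide), hence constant. The main obstacle throughout is the lone analytic input in the local lemma, namely propagating the essential-spectrum gap from a point to a neighbourhood under mere norm-resolvent convergence, together with choosing the cut-off levels $\pm a_i$ uniformly off the spectrum; once this is secured the remainder is bookkeeping with ranks of continuously varying finite-rank projections.
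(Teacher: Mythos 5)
Your proposal is correct, but note that the paper itself offers no proof of this Proposition/Definition --- it is imported verbatim from Phillips \cite{P96} and Booss-Bavnbek--Lesch--Phillips \cite{BLP05}, which is why it is stated as a citation. Your reconstruction (equivalence of norm-resolvent continuity with continuity of the $C_0$-functional calculus, the local uniform-spectral-gap lemma giving finite-rank continuous cut-off projections, compactness of $I$ to get the partition, telescoping/common-refinement arguments for independence of the $a_i$ and of the partition, and the grid argument on $I\times[0,1]$ for homotopy invariance) is precisely the argument of those cited references, with only minor glosses (the bump function must equal $1$ on the full window where spectrum in $[-a_i,a_i]$ can sit, and the homotopy argument needs the endpoint contributions to cancel, as they do for fixed endpoints or loops).
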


\begin{figure}[htb]
	\centering
	\includegraphics[scale=1.1]{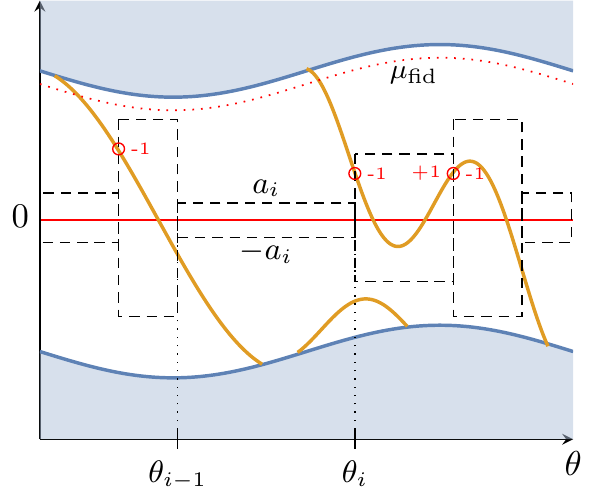}
	\caption{The spectral flow across 0 energy is defined using any jagged cylinder (black dotted lines) of sub-intervals with no leaking of branches on the horizontal edges. It extends Definition~\ref{defn:edge.index.naive} to {situations where the edge mode branches may behave pathologically.}  
		The same spectral flow is obtained across a fiducial curve (red dotted curve) lying slightly below the positive essential spectrum.
\label{fig:spectral_flow}}
\end{figure}

{The above definition of spectral flow is illustrated in Figure~\ref{fig:spectral_flow}.} Along the $i$-th subpath, continuity of $\chi_{[-a_i,a_i]}(F(\theta))$ ensures constancy of the rank, so that no ``leakage'' of eigenvalues occurs across the local upper/lower energy bounds $\pm a_i$. Then the spectral flow across 0 during this subpath is simply the net increase in the number of non-negative eigenvalues at the left endpoint $\theta_{i-1}$ as compared to that at the right endpoint $\theta_i$. Summing up these local spectral flows gives the spectral flow along the full path.

Notice that any eigenvalues above/below $\pm a_i$ are, by construction, irrelevant to the local spectral flows, so it does not matter that they might accumulate into the essential spectrum. Also, the transversality of eigenvalue curves across 0 is not required.

\subsection{Spectral flow across an energy curve}

In Eq.\ \eqref{eq:fiducial.versus.straight}, we wrote that in the unperturbed model, the edge mode crossings can be taken with respect to a constant energy $\mu$ in the bulk gap, or a fiducial energy curve $\mu_{\rm fid}$ slightly below/above the upper/flat bulk band. For the latter, a notion of emergence/disappearance from a bulk band is used (Definition \ref{defn:edge.index}). For a perturbed model, these counts may become ill-defined, but their replacements by spectral flows a la Phillips are always well-defined.

Usually, for a path $F:[0,2\pi]\rightarrow\mathscr{F}^{\rm sa}$, the integer ${\rm Sf}(F)$ measures the net eigenvalue flow across the 0-energy reference curve. {More generally,} we may wish to adopt another continuous fiducial reference curve $\mu_{\rm fid}:[0,2\pi]\rightarrow \mathbb{R}$ such that
\begin{equation*}
 \mu_{\rm fid}(\theta)\not\in\sigma_{\rm ess}(F(\theta)),\;\;\forall\theta\in [0,2\pi],
\end{equation*}
and consider the eigenvalue flow across $\mu_{\rm fid}$, namely, the quantity ${\rm Sf}(F-\mu_{\rm fid})$.

The original path $F$ may be considered as being homotopic to a concatenation of three continuous paths: (i) turning on $\mu_{\rm fid}$, (ii) $F-\mu_{\rm fid}$, (iii) turning off $\mu_{\rm fid}$.
Spectral flow is homotopy invariant and additive under concatenation of paths. This means that we have 
\begin{align}
{\rm Sf}(F)&={\rm Sf}\left(\{F(0)-t\mu_{\rm fid}(0)\}_{t\in [0,1]}\right)+{\rm Sf}\left(F-\mu_{\rm fid}\right)+{\rm Sf}\left(\{F(2\pi)-\mu_{\rm fid}(2\pi)+t\mu_{\rm fid}(2\pi)\}_{t\in [0,1]}\right)\nonumber\\
&=-\#_{[0,\mu_{\rm fid}(0))}F(0) + {\rm Sf}\left(F-\mu_{\rm fid}\right) + \#_{[0,\mu_{\rm fid}(2\pi))}F(2\pi).\label{eq:sf.stokes}
\end{align}
Here, $\#_{[0,\lambda)}$ denotes the number of eigenvalues lying in the energy interval $[0,\lambda)$, counted with multiplicity. From Eq. \eqref{eq:sf.stokes}, we deduce that \emph{for an operator loop}, $F(0)=F(2\pi)$, \emph{the spectral flow is independent of the choice of reference energy loop} (with $\mu_{\rm fid}(0)=\mu_{\rm fid}(2\pi)$),
\begin{equation}
{\rm Sf}(F)={\rm Sf}(F-\mu_{\rm fid}),\label{eq:fiducial.independence}
\end{equation}
generalizing Eq.\ \eqref{eq:fiducial.versus.straight}.

\subsubsection{Relation to (e)merging events {at bulk-band edges}}
In Definition \ref{defn:spectral.flow} of spectral flow of an operator loop, we use a partition by $\theta_i\in [0,2\pi]$ and local energy bounds $\pm a_i$ for the $i$-th subpath. The union of the rectangles $[\theta_{i-1},\theta_i]\times [-a_i,a_i]$ over $i=1,\ldots N$ is a jagged cylinder with jagged upper (horizontal) boundary across which no leakage of eigenvalues occurs{, see Figure \ref{fig:spectral_flow}}.

Write $\Lambda(\theta)$ for the bottom of the positive essential spectrum {of $F(\theta)$.} 
We pick the fiducial reference energy curve $\mu_{\rm fid}$ to lie slightly below $\Lambda$ and above the jagged boundary (Figure \ref{fig:spectral_flow}). Suppose, in addition, that the discrete spectrum of $F(\theta)$ is finite for all $\theta$ (this is satisfied in our unperturbed model). Since at each $\theta_i, i=0,\ldots, N$, there is a maximal eigenvalue below the positive essential spectrum (no accumulation occurs), we may furthermore arrange for $\mu_{\rm fid}$ to be large enough so that
\begin{equation}
\#_{[\mu_{\rm fid}(\theta_i), \Lambda(\theta_i))}=0,\qquad i=0,\ldots,N.\label{eq:extra.gap.condition}
\end{equation}
Compare, for the $i$-th subpath, the spectral flow across $\mu_{\rm fid}$ with the spectral flow across $a_i$. By construction, the latter vanishes (``no leakage''). We deduce, in the same way as Eq.\ \eqref{eq:sf.stokes}, that
\begin{align}
{\rm Sf}(\{F(\theta)-\mu_{\rm fid}(\theta)\}_{\theta\in[\theta_{i-1},\theta_i]})&=\#_{[a_i,\mu_{\rm fid}(\theta_{i-1}))}F(\theta_{i-1})-\#_{[a_i,\mu_{\rm fid}(\theta_{i}))}F(\theta_{i})\label{eq:local.sf.bound.state}\\
&=\#_{[a_i,\Lambda(\theta_{i-1}))}F(\theta_{i-1})-\#_{[a_i,\Lambda(\theta_{i}))}F(\theta_{i}),\nonumber
\end{align}
with the last equality due to Eq.\ \eqref{eq:extra.gap.condition}. 
Thus the $i$-th local spectral flow is simply the change in the number of bound states lying between $a_i$ and $\Lambda$. {Up to a sign, this is}  Definition~\ref{defn:edge.index} applied to the $i$-th subpath, for which a relative Levinson's theorem like Proposition~\ref{prop:relative.Levinson} applies. The full spectral flow of $F$ is then the sum of such local changes in the bound state count.

Notice that we cannot generally count the bound states starting from a \emph{single} global energy level $a$ for the entire loop --- this would entail a global ``no leakage'' condition across $a$, and therefore no spectral flow at all. The point is that bound states are allowed to disappear into {or emerge from a lower band of essential spectrum}  (a lower-lying bulk band). In contrast, for more usual applications of original Levinson's theorem to bounded-below operators, there does exist some global lower energy bound $a$ which is never breached {by the bound states}.

\begin{rem}
If accumulation of eigenvalues occurs, we have to count the (change in the) number of states lying below a fiducial energy slightly smaller than the essential spectrum, as in Eq.\ \eqref{eq:local.sf.bound.state}.
\end{rem}

\subsection{Spectral flow structure of shallow-water wave model}\label{sec:full.sf.structure}
The scattering theory analysis of \cite{GJT21} and Section \ref{sec:BEC} show that
\begin{equation*}
\sigma_{\rm ess}(H^\sharp(k_x,a))=\sigma(H(k_x)) =\; (-\infty,-\sqrt{k_x^2+(f-\nu k_x^2)^2}]\,\cup\,\{0\}\,\cup\,[\sqrt{k_x^2+(f-\nu k_x^2)^2},\infty).
\end{equation*}
Thus $H^\sharp(k_x,a)$ has a negative essential spectral gap, and a positive essential spectral gap.

\medskip
Theorem \ref{thm:gap.continuity} on the continuity of the parametrization $(k_x,a)\mapsto H^\sharp(k_x,a)$ means that the spectral flow along any path $I\rightarrow \mathring{C}$, across any continuous energy curve $\mu_{\rm fid}$ in the positive essential spectral gap, is well-defined. In the case of a loop in $\mathring{C}$, the choice of reference energy loop $\mu_{\rm fid}$ is immaterial (Eq.\ \eqref{eq:fiducial.independence}), and the following definition is meaningful.
\begin{defn}\label{defn:spec.flow.structure}
Let $\ell:I\rightarrow \mathring{C}$ be any continuous loop. {We define
\begin{equation*}
{\rm Sf}^+(\ell):={\rm Sf}\left(\left\{(H^\sharp-\mu_{\rm fid})(\ell(\theta))\right\}_{\theta\in I}\right),
\end{equation*}
where $\mu_{\rm fid}\equiv\mu_{\rm fid}(\cdot)$ is any continuous fiducial energy function valued in the positive essential spectral gap of $H^\sharp(\cdot)$.} 
This assignment of integers ${\rm Sf}^+(\ell)$ to each loop $\ell$ in $\mathring{C}$ is called the \emph{spectral flow structure} of the shallow-water wave model on the half-plane.
\end{defn}

The spectral flow only depends on the homotopy class of $\ell$, so ${\rm Sf}^+$ descends to a group homomorphism 
\begin{equation*}
{\rm Sf}^+(\pi_1(\mathring{C}))\rightarrow\ZZ,
\end{equation*}
and it is enough to work this out for a set of generators of $\pi_1(\mathring{C})$. It is easy to see that $\pi_1(\mathring{C})\cong F_2$, the free group on two generators, since $\mathring{C}$ is homotopy equivalent to a wedge of two circles. Thus the generating loops in $\mathring{C}$ can be taken to be a loop $\ell_+$ at any fixed $k_x>0$, together with a loop $\ell_-$ at any fixed $k_x^\prime<0$, oriented along increasing $a$ (Figure \ref{fig:cylinder}).

There now appears to be two independent edge topological invariants, ${\rm Sf}^+(\ell_+)$ and ${\rm Sf}^+(\ell_-)$, whereas there is only one bulk Chern number. Actually, we have the following relationship:
\begin{prop}\label{prop:right.left.sf}
Let $\ell_\pm:I\rightarrow \mathring{C}$ be the loop at some constant $k_x\gtrless 0$, parametrized by increasing $a\in\mathbb{R}\cup\{\infty\}$. Then ${\rm Sf}^+(\ell_+)=-{\rm Sf}^+(\ell_-)$.
\end{prop}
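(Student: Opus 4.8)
The plan is to exploit an antiunitary particle-hole symmetry that is visible directly in the half-space operator. Writing $K$ for complex conjugation on $L^2((0,\infty);\CC^3)$, a short computation with \eqref{eqn:formal.1D} shows $\overline{H^\sharp(k_x)}=-H^\sharp(-k_x)$ at the level of the formal differential operator: the only $k_x$-odd entry is the $(1,2)$/$(2,1)$ coupling, while conjugation flips every factor of $\ii$ (and hence every derivative and ``$f-\nu(\cdots)$'' entry), which is precisely an overall sign combined with $k_x\mapsto -k_x$. Moreover $K$ sends the boundary condition \eqref{eq:boundary_condition} for $(k_x,a)$ to the one for $(-k_x,a)$, since $\overline{\ii k_x u+a\partial_y v}=-\ii k_x\bar u+a\partial_y\bar v$. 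Hence $K$ intertwines the \emph{self-adjoint} operators, $K\,H^\sharp(k_x,a)\,K^{-1}=-H^\sharp(-k_x,a)$, respecting domains. I would first record this identity carefully, as it is the only model-specific input.

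Next I would convert this into a statement about spectral flow using two elementary facts: conjugation by a fixed antiunitary leaves every eigenvalue curve unchanged, so it preserves spectral flow, while replacing a path by its negative reverses all crossings, so ${\rm Sf}(-F)=-{\rm Sf}(F)$. Introduce ${\rm Sf}^-(\ell):={\rm Sf}(\{(H^\sharp-\mu^-)(\ell(\theta))\}_{\theta})$ for a fiducial curve $\mu^-$ in the \emph{negative} essential gap; as for ${\rm Sf}^+$, this is well defined and fiducial-independent for loops by Eq.~\eqref{eq:fiducial.independence}. Applying the intertwiner along $\ell_-$ (at $k_x=-c<0$) gives ${\rm Sf}^+(\ell_-)={\rm Sf}(\{H^\sharp(-c,a)-\mu^+\})=-{\rm Sf}(\{H^\sharp(c,a)+\mu^+\})=-{\rm Sf}^-(\ell_+)$, where $-\mu^+=\mu^-$ lies in the lower gap; symmetrically ${\rm Sf}^+(\ell_+)=-{\rm Sf}^-(\ell_-)$. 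Thus the claimed antisymmetry ${\rm Sf}^+(\ell_+)=-{\rm Sf}^+(\ell_-)$ is \emph{equivalent} to the reduced statement that upper- and lower-gap spectral flows agree along a fixed-$k_x$ loop, i.e. ${\rm Sf}^+(\ell_\pm)={\rm Sf}^-(\ell_\pm)$.

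This reduced statement is the main obstacle, and it is genuinely new information. Indeed the particle-hole symmetry, together with the homology relation $[\ell_+]-[\ell_-]=[\ell_0]$ and the known value ${\rm Sf}^\pm(\ell_0)=-C_+$ (from $\ell_0\simeq\mathcal C_R$ and Theorem~\ref{thm:naive_BEC}), only determine the four numbers ${\rm Sf}^\pm(\ell_\pm)$ up to one free parameter, so an extra argument cannot be avoided. The difference ${\rm Sf}^+(\ell)-{\rm Sf}^-(\ell)$ measures the net number of discrete eigenvalues absorbed by or emitted from the flat essential band at $\omega=0$ as $a$ runs around the circle, and the two flows coincide exactly when no edge-mode branch \emph{terminates} at $\omega=0$ but rather passes through it.

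To settle this I would use the explicit, constant-coefficient ODE underlying $H^\sharp(k_x)$: at fixed $k_x$ the characteristic decay rates enter only through $\omega^2$, which I expect to force every edge branch reaching a neighbourhood of $\omega=0$ to cross the flat-band energy rather than merge into it, so that each branch contributes equally to ${\rm Sf}^+$ and to ${\rm Sf}^-$. Concretely, I would feed the fixed-$k_x$ loops, taken at $\kappa\to 0^{\pm}$ at both the upper and the lower band edges, into the relative Levinson theorem (Proposition~\ref{prop:relative.Levinson}) and verify that the two resulting winding numbers of the scattering amplitude coincide. The careful bookkeeping of (e)merging events at the $\omega=0$ edge, and the verification that no branch stalls there, is where the real work lies; everything else is formal consequence of the antiunitary symmetry and of fiducial-independence for loops.
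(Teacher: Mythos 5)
Your antiunitary observation is correct and is genuinely different input from the paper: complex conjugation $K$ indeed satisfies $K\,H^\sharp(k_x,a)\,K^{-1}=-H^\sharp(-k_x,a)$, including the boundary condition \eqref{eq:boundary_condition}, and your formal bookkeeping (${\rm Sf}(-F)=-{\rm Sf}(F)$, invariance under a fixed antiunitary conjugation, fiducial-independence for loops) correctly yields ${\rm Sf}^+(\ell_\mp)=-{\rm Sf}^-(\ell_\pm)$. Your accounting of the degeneracy is also right: the symmetry plus ${\rm Sf}^\pm(\ell_0)=-C_+$ leaves one free parameter, so the reduced statement ${\rm Sf}^+(\ell_\pm)={\rm Sf}^-(\ell_\pm)$ carries all the content of the proposition. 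But that is precisely where your proposal stops being a proof: the reduced statement is only asserted as an expectation (``I expect every edge branch reaching $\omega=0$ to cross rather than merge''), and the tool you propose for verifying it fails at exactly the relevant energy. The flat band $\omega_0=0$ is an infinitely degenerate point of essential spectrum, and, as noted in Section \ref{sec:BEC}, scattering theory and hence the relative Levinson theorem (Proposition \ref{prop:relative.Levinson}) cannot be set up at that band; the paper's scattering analysis deliberately works only at the dispersive band edges. Redoing Levinson at the \emph{lower} dispersive band would give ${\rm Sf}^-$ data, but along $\ell_0$-type loops this reproduces only ${\rm Sf}^-(\ell_0)={\rm Sf}^+(\ell_0)$, which your symmetry already implies for free, so it cannot break the one-parameter degeneracy; comparing upper- and lower-gap windings along the fixed-$k_x$ loops $\ell_\pm$ themselves would be substantial new work that you have not carried out. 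Note also that your dynamical picture is in tension with the paper's own interpretation (Remark \ref{rk:fixed_kx} describes pumping \emph{into} the flat band, i.e.\ absorption rather than transversal crossing), so the unproven step is not obviously true as you have framed it.

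For comparison, the paper's proof avoids the second gap entirely. It deforms the operator by replacing the bounded $(1,2)$/$(2,1)$ entries $k_x$ with $tk_x$ (keeping $k_x^2$ in the viscous entries and keeping the boundary condition \eqref{eq:boundary_condition} fixed), at a fixed large momentum $\tilde{k}_x>\sqrt{f/\nu}$ so that the positive essential gap $\bigl(0,\sqrt{t^2\tilde{k}_x^2+(f-\nu\tilde{k}_x^2)^2}\bigr)$ stays open for all $t\in[0,1]$. At $t=0$ the operators at $\pm\tilde{k}_x$ have the same differential expression, and the boundary condition at $-\tilde{k}_x$ is the one at $+\tilde{k}_x$ with $a$ replaced by $-a$; thus the $\ell_-$ loop becomes the orientation-reversed $\ell_+$ loop, and homotopy invariance of spectral flow gives ${\rm Sf}^+(\ell_+)=-{\rm Sf}^+(\ell_-)$ directly. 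If you want to salvage your route, you would need an independent proof that no net (e)merging imbalance occurs at $\omega=0$ along $\ell_\pm$ --- for instance by explicitly solving the constant-coefficient bound-state problem at fixed $k_x$ as a function of $a$ --- since no Levinson-type theorem is available at the flat band.
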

\begin{proof}
For $0\leq t\leq 1$, consider the self-adjoint operators
\begin{equation*}
H^\sharp(k_x,a;t):=\begin{pmatrix} 0 & tk_x & -\ii\frac{d}{dy} \\ tk_x & 0 & -\ii(f-\nu(k_x^2-\frac{d^2}{dy^2})) \\ -\ii\frac{d}{dy} & \ii(f-\nu(k_x^2-\frac{d^2}{dy^2})) & 0\end{pmatrix},\qquad {\rm subject\;to}\;\eqref{eq:boundary_condition},
\end{equation*}
which at $t=1$ are just the half-line operators $H^\sharp(k_x,a)$ that we have been studying. Fix some large $\tilde{k}_x>\sqrt{f/\nu}$, then the positive essential spectral gap of $H^\sharp(\pm \tilde{k}_x,a;t)$ is $(0,\sqrt{t^2\tilde{k}_x^2+(f-\nu \tilde{k}_x^2)^2})$ and remains open as $t$ is decreased from $1$ to $0$. We shall take $\mu_{\rm fid}$ to be some small constant positive number $\mu$ (say $\mu=f/2$) which stays in this gap. We can compute ${\rm Sf}^+(\ell_\pm)$ as the spectral flow along the loop at $\pm \tilde{k}_x$ (due to homotopy invariance),
\begin{equation}
{\rm Sf}^+(\ell_\pm)={\rm Sf}\left(a\mapsto (H^\sharp(\pm\tilde{k}_x,a;t=1)-\mu)\right)\overset{t\rightarrow 0}{=}{\rm Sf}\left(a\mapsto (H^\sharp(\pm\tilde{k}_x,a;t=0)-\mu)\right).\label{eqn:homotoped.sf}
\end{equation}
Now observe that $H^\sharp(\tilde{k}_x,a;0)$ and $H^\sharp(-\tilde{k}_x,a;0)$ are the same operators, except that the former is subject to boundary condition $a/\tilde{k}_x$ whereas the latter is subject to $a/(-\tilde{k}_x)=(-a)/\tilde{k}_x$, i.e., the loop parameter $a$ is swapped for $-a$. Therefore, the right side of Eq. \eqref{eqn:homotoped.sf} acquires a sign change when $\ell_+$ is changed to $\ell_-$, giving the desired result.
\end{proof}

{The proof of Proposition~\ref{cor:bc.pumping} is now straightforward. }
The anticlockwise loop $\ell_0$ encircling the singularity $(0,0)$ is (homotopy equivalent to) the concatenation of $\ell_+$ and $\ell_-^{\rm op}$, where $(\cdot)^{\rm op}$ denotes orientation reversal. Thus
\begin{equation*}
{\rm Sf}^+(\ell_0)= {\rm Sf}^+(\ell_+) + {\rm Sf}^+(\ell_-^{\rm op})={\rm Sf}^+(\ell_+) - {\rm Sf}^+(\ell_-)=2\,{\rm Sf}^+(\ell_+),
\end{equation*}
where the last equality follows from Prop.\ \ref{prop:right.left.sf}. In Corollary \ref{cor:perturbed}, we saw that the spectral flow around the singularity is minus the Chern number of the upper bulk band,
${\rm Sf}^+(\ell_0)=-C_+=-2$.

\begin{rem}\label{rem:negative.sf}
We can similarly define ${\rm Sf}^-$ as the spectral flow structure across energy curves in the lower essential spectral gap. The lower bulk band has Chern number $C_-=-2$, and the bound state counting above this band via Levinson's theorem needs a sign change. Overall, the conclusion is that ${\rm Sf}^-={\rm Sf}^+$.
\end{rem}

\begin{rem}
 In the {3D} Weyl semimetal setting from \cite{Thiang21}, one encounters a singularity in the {2D} boundary momentum space parameter due to a closing of the essential spectral gap at the projected \emph{Weyl point}, leading to a loss of the Fredholm condition there. This singularity does not involve the boundary condition, so the \emph{Fermi arcs} as determined by the spectral flows are present whatever the boundary condition. A peculiar feature of the shallow-water model is that the parameter space singularity is not due to gap closing, but rather a loss of self-adjointness. 
\end{rem}


\begin{thebibliography}{bib}
\bibitem{Avilaetal13}
Avila, J.C., Schulz-Baldes, H., and Villegas-Blas, C. (2013) \textit{Topological invariants of edge states for periodic two-dimensional models.} Math.~Phys.~Anal.~Geom. \textbf{16(2)} 137--170

\bibitem{AvronSeilerSimon94}
Avron, J.E., Seiler, R., and Simon, B. (1994) \textit{Charge deficiency, charge transport and comparison of dimensions.} Commun. Math. Phys. \textbf{159(2)} 399--422

\bibitem{BellissardVanElstSchulzBaldes94}
Bellissard, J., van Elst, A., and Schulz-Baldes, H. (1994). \textit{The noncommutative geometry of the quantum Hall effect.} J. Math. Phys. \textbf{35(10)} 5373--5451

\bibitem{BLP05}
Booss-Bavnbek, B., Lesch, M., and Phillips, J. (2005) \textit{Unbounded Fredholm Operators and Spectral
Flow.}
Canad. J. Math. \textbf{57(2)} 225--250

\bibitem{BourneRennie18}
Bourne, C., and Rennie, A. (2018) \textit{Chern numbers, localisation and the bulk-edge correspondence for continuous models of topological phases.} Math. Phys. Anal. Geom. \textbf{21(3)} 16

\bibitem{Braverman}
Braverman, M. (2019) \textit{Spectral flows of Toeplitz operators and bulk-edge correspondence.} Lett. Math. Phys. \textbf{109(10)} 2271--2289

\bibitem{CareyThiang}
Carey. A.L., and Thiang, G.C. (2021) \textit{The Fermi gerbe of Weyl semimetals.} Lett. Math. Phys. \textbf{111} 72




\bibitem{CombesGerminet05}
Combes, J.M., and Germinet, F. (2005) \textit{Edge and impurity effects on quantization of Hall currents.} Commun. Math. Phys. \textbf{256(1)} 159--180

\bibitem{DelplaceMarstonVenaille17}
Delplace, P., Marston, J.B., and Venaille, A. (2017) \textit{Topological origin of equatorial waves.} Science \textbf{358(6366)} 1075--1077

\bibitem{DeNittisLein17}
De Nittis, G., and Lein, M. (2019) \textit{Symmetry classification of topological photonic crystals.} Adv. Theor. Math. Phys. \textbf{23(6)} 1467--1531

\bibitem{DeNittisSchulzBaldes16}
De Nittis, G., and Schulz-Baldes, H. (2016) \textit{Spectral flows associated to flux tubes.}  Ann. Henri Poincar\'{e} \textbf{17} 1--35

\bibitem{Drouot19}
Drouot, A. (2019) \textit{The bulk-edge correspondence for continuous honeycomb lattices.} Commun. Partial Differ. Equ. \textbf{44(12)} 1406-1430

\bibitem{EssinGurarie11}
Essin, A.M., and Gurarie, V. (2011) \textit{Bulk-boundary correspondence of topological insulators from their Green's functions.} Phys.~Rev.~B \textbf{84} 125132

\bibitem{Gomi}
Gomi, K. (2022) \textit{Homological bulk-edge correspondence for Weyl semimetals.} Prog. Theor. Exp. Phys. \textbf{2022} 04A106

\bibitem{GJT21}
Graf, G.M., Jud, H., and Tauber, C. (2021) \textit{Topology in shallow-water waves: a violation of bulk-edge correspondence.} Commun. Math. Phys. \textbf{383(2)} 731--761

\bibitem{GrafPorta}
Graf, G.M., and Porta, M. (2013) \textit{Bulk-edge correspondence for two-dimensional topological insulators.} Commun. Math. Phys. \textbf{324(3)} 851--895

\bibitem{Halperin82}
Halperin, B.I. (1982) \textit{Quantized Hall conductance, current-carrying edge states, and the existence of extended states in a two-dimensional disordered potential.} Phys.~Rev.B \textbf{25(4)} 2185

\bibitem{Hatsugai}
Hatsugai, Y. (1993) \textit{Chern number and edge states in the integer quantum Hall effect.} Phys. Rev. Lett. \textbf{71(22)} 3697


%

\bibitem{KellendonkSchulzBaldes04}
Kellendonk, J., and Schulz-Baldes, H. (2004) \textit{Quantization of edge currents for continuous magnetic operators.} J. Funct. Anal. \textbf{209(2)} 388--413

\bibitem{Laughlin81}
Laughlin, R.B. (1981) \textit{Quantized Hall conductivity in two dimensions.} Phys. Rev. B \textbf{23(10)} 5632

\bibitem{NiuThoulessWu85}
Niu, Q., Thouless, D. J., and Wu, Y. S. (1985) \textit{Quantized Hall conductance as a topological invariant.} Phys. Rev. B \textbf{31(6)} 3372

\bibitem{P96}
Phillips, J. (1996)
\textit{Self-adjoint Fredholm operators and spectral flow.}
Canad. Math. Bull. \textbf{39} 460--467

\bibitem{RS1}
Reed, M., and Simon, B. (1980) \textit{Methods of Modern Mathematical Physics. Vol. I: Functional Analysis.} Acad. Press	(San Diego)

\bibitem{RS2}
Reed, M., and Simon, B. (1975) \textit{Methods of Modern Mathematical Physics. Vol. II: Fourier analysis, self-adjointness.} Acad. Press	(San Diego)

\bibitem{ProdanSchulzBaldes16}
Prodan, E., and Schulz-Baldes, H. (2016) \textit{Bulk and boundary invariants for complex topological insulators. From K-theory to physics.} Math.~Phys.~Stud., Springer

\bibitem{Perietal19}
Peri, V., Serra-Garcia, M., Ilan, R., and Huber, S.D. (2019) \textit{Axial-field-induced chiral channels in an acoustic Weyl system.} Nat.~Phys. \textbf{15(4)} 357

\bibitem{RaghuHaldane08}
Raghu, S., and Haldane, F.D.M. (2008) \textit{Analogs of quantum-Hall-effect edge states in photonic crystals.} Phys.~Rev.~A \textbf{78(3)} 033834

\bibitem{SchulzBaldesKellendonkRichter00}
Schulz-Baldes, H., Kellendonk, J., and Richter, T. (2000) \textit{Simultaneous quantization of edge and bulk Hall conductivity.}  J.~Phys.~A: Math. Gen. \textbf{33(2)} L27


\bibitem{TDV19} Tauber, C., Delplace, P., and Venaille, A. (2019) \textit{A bulk-interface correspondence for equatorial waves.} J. Fluid Mech. \textbf{868}

\bibitem{TauberDelplaceVenaille19bis}
Tauber, C., Delplace, P., and Venaille, A. (2019) \textit{Anomalous bulk-edge correspondence in continuous media.} Phys. Rev. Research \textbf{2(1)} 013147

\bibitem{Thiang21}
Thiang, G.C. (2021) \textit{On spectral flow and Fermi arcs.} Commun. Math. Phys. \textbf{385} 465--493

\end{thebibliography}
\end{document}